\definecolor{darkgreen}{rgb}{0,0.7,0}
\newcommand{\kibitz}[2]{\ifnum\Comments=1\textcolor{#1}{#2}\fi}
\renewcommand{\vec}[1]{\mathbf{#1}}
\newtheorem{example}{Example}
\newtheorem{theorem}{Theorem}
\newenvironment{proof}{\paragraph{Proof:}}{\hfill$\square$}
\title{To Give or Not to Give: Fair Division for Single Minded Valuations\thanks{The authors aknowledge support from
the Danish National
Research Foundation and The National Science Foundation of China
(under the grant
61361136003) for the Sino-Danish Center for the Theory of Interactive Computation and
from the Center for Research in Foundations of Electronic Markets (CFEM), supported by
the Danish Strategic Research Council.
A part of this work was done while Simina Br\^{a}nzei and Ruta Mehta were visiting the Simons Institute for the Theory of Computing.
Simina Br\^{a}nzei was also supported by ISF grant 1435/14 administered by the Israeli Academy of Sciences and Israel-USA Bi-national Science Foundation (BSF) grant  2014389, as well as the I-CORE Program of the Planning and Budgeting Committee and The Israel Science Foundation.
}}
\author{Simina Br\^anzei\footnote{
Hebrew University of Jerusalem, Israel. E-mail: {\tt simina.branzei@gmail.com}}
\and
Yuezhou Lv\footnote{
Tsinghua University, China. E-mail: {\tt totolv@126.com}}\\
\and
Ruta Mehta\footnote{
University of Illinois Urbana-Champaign, USA. E-mail: {\tt rutameht@illinois.edu}}
}
\date{}
\begin{document}

\maketitle


\begin{abstract}
Single minded agents have strict preferences, in which a bundle is acceptable only if it meets a certain demand. Such preferences arise
naturally in scenarios such as allocating computational resources among users, where the goal is to fairly serve as many
requests as possible. In this paper we study the fair division problem for such agents, which is harder to handle due to discontinuity and complementarities of
the preferences.

Our solution concept---the \emph{competitive allocation from equal incomes} (CAEI)---is inspired from market equilibria and implements
fair outcomes through a pricing mechanism. We study the existence and computation of CAEI for multiple divisible goods, cake cutting, and multiple
discrete goods.
For the first two scenarios we show that existence of CAEI solutions is guaranteed, while for the third we give a succinct characterization of instances that admit this solution; then we give an efficient algorithm to find one in all three cases. Maximizing social welfare turns out to be NP-hard in general, however we obtain efficient algorithms for (i) divisible and discrete goods when the number of different \emph{types} of players is a constant, (ii) cake cutting with contiguous demands, for which we establish an interesting connection with interval scheduling, and (iii) cake cutting with a constant number of players with arbitrary demands.

Our solution is useful more generally, when the players have a target set of desired goods, and very small positive values for any bundle not containing their target set.
\end{abstract}

\newpage

\section{Introduction} \label{sec:intro}

The question of dividing scarce resources among multiple participants in a way that is fair 
has remained a pressing question that intrigued humans for a long time, 
be it for dividing land among citizens\footnote{Divide-and-choose is mentioned already in the Bible, in the Book of Genesis (chapter
13)} or allocating organizational resources to its members. 
The formal study of {\em fair division} as we know it started during World War II with works of Steinhaus, Knaster and Banach
\cite{Steinhaus51}. Much of the earlier work focused on one heterogeneous divisible good, i.e. the cake cutting
problem~\cite{BT96,RW98}, while other models (such as the problem of allocating multiple divisible or discrete goods) and notions of fairness were studied later in a growing body of
literature, which includes multiple academic books~\cite{BT96,RW98,Moulin04}. Fair division has been recently studied in the computer science community, as problems in resource
allocation and fair division in particular are arguably relevant in scenarios such as manufacturing and
scheduling, airport traffic, and industrial procurement~\cite{CDEL+06,Pro13}. 

Prominent examples of fair division models that surfaced recently in real scenarios include problem of allocating computational
resources (e.g. CPU, memory, bandwith) among the users of a system~\cite{GZHB+11}, or the problem of allocating university courses to
students in a way that is fair and efficient. The latter motivated the introduction of a notion of fairness known as
A-CEEI~\cite{Budish11}, which approximates the well known ideal notion of fairness from economics, the \emph{competitive equilibrium
from equal incomes}~\cite{Foley67,Varian74}. Othman \emph{et al.} [\cite{OPR14}] studied the complexity of A-CEEI with pessimistic
conclusions for general preferences. Nevertheless, the A-CEEI solution is used to allocate courses to students at the Wharton Business
School at the University of Pennsylvania.

The largest body of the literature on fair division models for multiple divisible and indivisible goods, as well as cake cutting, 
focuses on additive valuations, which capture \emph{perfect substitutes}, i.e. goods that can replace each other in consumption,
such as Pepsi and Coca-Cola. However, many real allocation problems have complementarities in the preferences to various degrees. For
example, if the user of a computer system wants to run a specific task, they may need $2$ units of CPU and $5$ units of RAM. It is not
useful if the user gets only $1$ unit of CPU but $6$ or more units of RAM -- they simply cannot run the task because CPU is a
bottleneck resource in this case.

An important scenario with complements is that of \emph{single minded} valuations, where the agent values a particular bundle and
nothing less, and anything extra does not add to the value. 
Such valuations arise naturally, for example assembling a bike requires a set of parts, or an agent's computational task can finish if and
only if it is allocated a required bundle of resource. 
The latter example represents basis for more complex models that take into
account the dynamics over time. Due to their immense applicability, there is an extensive body of literature on such single minded
valuations in areas such as auctions (see, e.g.,~\cite{BKV05,NRTV07,L07,RPIJ13}), however work on fair division with such valuations is
largely missing. 

\subsection{Our Contribution}

We study fair division among single minded agents for three main scenarios, namely multiple divisible goods, cake
cutting, and discrete goods. Our main solution concept---the
``competitive allocation from equal incomes'' (CAEI)---is inspired from market equilibria and implements fair outcomes through a
pricing mechanism. The CAEI solution can be seen as a relaxation of the standard competitive equilibrium from equal incomes and is a
more intuitive notion for discrete resources than the latter. However many of our results, most notably for multiple
divisible goods, carry over to the competitive equilibrium from equal incomes solution. The CAEI solution concept is sandwiched
between the competitive equilibrium from equal incomes and envy-freeness, and it prunes (via the prices), the least desirable envy-free
allocations. Nevertheless, CAEI  exists if and only if envy-free allocations exist in all the fair division models we consider.

For multiple divisible goods and cake cutting we show existence of CAEI, while for discrete goods we give a succinct characterization of
instances that admit this solution; then we give an efficient algorithm to find one in all cases. Maximizing social welfare turns
out to be NP-hard in general, however we obtain efficient algorithms for a number of settings, (i) divisible and discrete goods when
the number of different \emph{types} of agents is a constant, (ii) cake cutting with contiguous demands, for which we establish an interesting connection with interval scheduling, and (iii) cake cutting with
constant number of agents.

Our results also carry over to valuation models where the agents have a desired target set (e.g. the set of parts of a bike),
while having very small positive value ($\epsilon$) for any bundle that does not contain their desired set. For such scenarios (which are compatible with 
experimental evidence that people like to accumulate ``stuff'', even if unneeded), the solution computed is an $\epsilon$-CAEI.

\section{Background} \label{sec:model}

We begin by presenting the model and solution concept for the most general setting. Let $N = \{1, \ldots, n\}$ be a set of
agents and $\mathcal{R}$ a set of resources, where $\mathcal{R}$ is a compact subset of a Euclidean space.  Each agent $i$ is
equipped with a valuation function $V_i$ over the resources, such that for each subset $S \subseteq \mathcal{R}$, $V_i(S)$ represents
the valuation of agent $i$ for bundle $S$. The goal is to allocate the resources among the agents in a way that is fair.

Agent $i$ is said to be \emph{single minded} if there exists a bundle $D_i \subseteq \mathcal{R}$ that $i$ cannot be happy without;
that is, for any other bundle $S \subseteq \mathcal{R}$ we have that $V_i(S) = 1$ if $D_i \subset S$ and $V_i(S) = 0$ otherwise.

\vspace{2mm}
An \emph{allocation} $\vec{x} = (\vec{x}_1, \ldots, \vec{x}_n)$ is a partition of the set $\mathcal{R}$ of resources such that
$\vec{x}_i \subseteq \mathcal{R}$ is the bundle received by agent $i$, the bundles are not intersecting and add up to the whole space.

Our solution concept---the ``competitive allocation from equal incomes'' (CAEI)---is inspired from market equilibria and implements
fair outcomes through a pricing mechanism. More formally, each agent is given an artificial unit of currency 
by the center.  It can be used to acquire goods, and has no intrinsic value. The agent wants to spend its budget to acquire a bundle
of items that maximizes its utility. A CAEI outcome is defined as a tuple $\langle p, \vec{x}\rangle$, where $\vec{x}$ is an allocation
and $p: \mathcal{R} \rightarrow \mathbb{R}_{+}$ is an integrable, non-negative price density function, such that for each subset $S
\subseteq \mathcal{R}$, $$p(S) = \int_{x \in S} p(x) dx.$$

Formally, a tuple $\langle p, \vec{x} \rangle$ is a CAEI solution if and only if:
\begin{itemize}
\item For all $i \in N$, $\vec{x}_i$ maximizes agent $i$'s utility given prices $p$ and its unit budget.

\item All the resources are allocated: $\bigcup_{i=1}^{n} \vec{x}_{i} = \mathcal{R}$.
\item No agent overspends: $p(\vec{x}_i) \leq 1$ for all $i \in N$.
\end{itemize}

The difference between the CAEI solution and the classical notion of a competitive equilibrium from equal incomes (CEEI), is that under CEEI, the agents are additionally required to spend all of their budget. Since we are in a
fair division setting, neither the center nor the agents have value for the units of currency, thus the requirement of spending all
the budgets can be relaxed naturally. Moreover, in several of the scenarios we study, the CEEI solution is not as tractable due to discontinuity of the valuations and/or discrete goods. 
Crucially, every CAEI allocation is envy-free, since if an agent envied another agent's bundle it could just buy it
instead, as the endowments are equal; the prices provide a mechanism for indicating the interest level in the different resources. However, since the valuations are discontinuous, CAEI outcomes are not necessarily efficient, and so
our aim will be to compute CAEI allocations with improved welfare guarantees.  

\vspace{2mm}
The \emph{social welfare} of an
allocation $\vec{x}$ is the sum of utilities of all the agents: $SW(\vec{x}) = \sum_{i=1}^{n} V_i(\vec{x}_i)$.

\section{Multiple Divisible Goods} \label{sec:multiple_divisible}

We now formalize the model with multiple divisible goods and single minded agents. There is a set $N = \{1,
\ldots, n\}$ of agents and a set $M = \{1, \ldots, m\}$ of goods. Without loss of generality, each good comes in one unit that is
infinitely divisible. Each agent $i$ demands a bundle $D_i = \langle v_{i,1}, \ldots, v_{i,m} \rangle$, where $v_{i,j} \in [0,1]$
denotes the fraction required by agent $i$ from good $j$. The utility of agent $i$ for a bundle $\vec{z} \in [0,1]^{m}$ is
$V_i(\vec{z}) = 1$ if $z_j \geq v_{i,j}$ for all $ j \in M$, and $V_i(\vec{z}) = 0$ otherwise. Since every part of a good is equally
valued by agents, it suffices to specify its per unit price, {\em i.e.,} $p_j\in \mathbb R_+$ for good $j$.

We first show that the CAEI solution is guaranteed to exist for multiple divisible goods with single minded valuations, and can
moreover be computed in polynomial time. In order to prove this, we must introduce the \emph{Fisher market} model, which is a classical model of an economy developed by Fisher~\cite{Brainard00}.

A \emph{Fisher market} $\mathcal{M}$ consists of a set $N = \{1, \ldots, n\}$ of buyers (agents) and a set
$M = \{1, \ldots, m\}$ of divisible goods (items). 
Every buyer $i$ has:
\begin{itemize}
\item an initial budget $B_i > 0$, which can be 
viewed as some currency that can be used to acquire goods but has no intrinsic value to the buyer, and
\item a utility function $u_i: [0, 1]^m \rightarrow \mathbf{R}$ that maps a quantity vector of the $m$ items to a real value. $u_i(\vec{x}_i)$ represents the buyer's utility when receiving $\vec{x}_i$ amount of the items.  
\end{itemize}

Without loss of generality, the supply of each good is assumed to be one unit. A standard type of valuations is known as \emph{Leontief}, where the utility of a buyer $i$ for a bundle $\vec{x}_i$ is: $u_i({\vec{x}_i})=\min_{j \in [m]} \left\{ \frac{x_{ij}}{v_{ij}} \right\}$, where $v_{i,j}$ is the coefficient that describes buyer $i$'s valuation for good $j$. Thus buyers with Leontief utilities desire the goods in the same ratios (e.g. in the case of two goods---CPU and RAM---a buyer with coefficients $5$ and $1$, respectively, will require $5$ more units of CPU for every additional unit of RAM).

Each buyer in the market wants to spend its entire budget to acquire a bundle of items that maximizes its utility.  A market outcome is defined as a tuple $\langle \vec{p}, \vec{x}\rangle$, where
$\vec{p}$ is a vector of prices for the $m$ items and 
$\vec{x} = (\vec{x}_1, \ldots, \vec{x}_n)$ is
an allocation of the $m$ items, with $p_j$ denoting the price of item $j$ and $x_{ij}$
representing the amount of item $j$ received by buyer $i$. A market outcome that 
maximizes the utility of each buyer subject to its budget constraint
and clears the market is called a \emph{market equilibrium}~\cite{NRTV07}.
Formally, $\langle \vec{p}, \vec{x} \rangle$ is a market equilibrium if and only if:

\begin{enumerate}
\item Optimal bundle: $\forall i \in N$ and $\forall \vec{y}: \vec{y} \cdot \vec{p} \le B_i$, $\; u_i(\vec{x}_i)\geq u_i(\vec{y})$
\item Market clearing: Each good is fully sold or has price zero, i.e., $\forall j \in M$, $\sum_{j=1}^m x_{i,j}\le 1$, and equality
holds if $p_j>0$. Each buyer exhausts all its budget, i.e., $\forall i \in N, \ \sum_{j=1}^m x_{i,j}p_j = B_i$.
\end{enumerate}

A market equilibrium is guaranteed to exist for very general valuations~\cite{Maxfield97}. For Leontief utilities, it can be computed using
the Eisenberg-Gale (EG) convex program formulations that follows.

\begin{equation}\label{eq:ME}
\begin{array}{rcl}
\mbox{max } \ \ & & \sum^{n}_{i=1} B_i \cdot \log u_i \\
\mbox{s.t. } \ \ & & u_i \le \frac{x_{i,j}}{v_{i,j}},\ \forall i \in N, j \in M \\
  & & \sum^n_{i=1} x_{i,j}\leq 1,\ \ \forall \ j \in M \\
           & & x_{i,j}\geq 0,\ \ \forall \ i \in N, j \in M
\end{array}
\end{equation}

We can now show the existence and computation of the CAEI solution for single minded agents.

\begin{theorem} \label{thm:existence_divisible}
Given a fair division problem with multiple divisible goods and single minded agents, a CAEI solution is guaranteed to exist and can
be computed in polynomial time.  
\end{theorem}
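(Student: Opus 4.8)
The plan is to reduce the problem to computing a market equilibrium in a Fisher market with Leontief utilities, for which the Eisenberg--Gale convex program \eqref{eq:ME} gives an equilibrium, and then to show that this equilibrium is already a CAEI solution for the single-minded instance. Concretely, from the fair-division instance I would build a Fisher market on the same agents and goods in which buyer $i$ has budget $B_i=1$ (equal incomes) and Leontief utility $u_i(\vec{z})=\min_{j:\,v_{ij}>0}\{z_j/v_{ij}\}$; an agent whose demand vector is zero is trivial (hand it the empty bundle, value $1$), so assume every $D_i$ has a positive coordinate, which also keeps \eqref{eq:ME} bounded. Let $\langle\vec{p},\vec{x}\rangle$ be a market equilibrium obtained from \eqref{eq:ME}. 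I take as the CAEI price density the piecewise-constant function equal to $p_j$ on good $j$, so that $p(S)=\sum_j p_j\cdot(\text{amount of good }j\text{ in }S)$, and if some zero-priced good is not fully sold at $\vec{x}$ I hand the leftover to an arbitrary agent. I claim the resulting $\langle\vec{p},\vec{x}\rangle$ is a CAEI solution.

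The key point is that the single-minded ``happiness threshold'' coincides with Leontief utility $1$: $V_i(\vec{z})=1$ iff $z_j\ge v_{ij}$ for all $j$ iff $u_i(\vec{z})\ge 1$, and $V_i(\vec{z})=0$ otherwise, while the maximum possible value of $V_i$ is $1$. Using this, I verify the optimal-bundle condition of CAEI for each agent $i$ by splitting on the cost $c_i:=\sum_j v_{ij}p_j$ of its demand under $\vec{p}$. If $c_i\le 1$, then the bundle holding exactly $v_{ij}$ of each good $j$ is affordable on budget $1$ and has Leontief utility $1$; since $\vec{x}_i$ maximizes the Leontief utility over $i$'s budget set at the equilibrium, $u_i(\vec{x}_i)\ge 1$, hence $x_{ij}\ge v_{ij}$ for all $j$ and $V_i(\vec{x}_i)=1$ is the maximum possible. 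If $c_i>1$, then every bundle of single-minded value $1$ costs at least $c_i>1$ (prices are nonnegative), so value $0$ is the best agent $i$ can attain; and $\vec{x}_i$ attains it, since budget exhaustion $\sum_j x_{ij}p_j=1<c_i$ precludes $x_{ij}\ge v_{ij}$ for all $j$, forcing $V_i(\vec{x}_i)=0$. This establishes the first CAEI condition; the other two are immediate: market clearing for the positively-priced goods together with the leftover-dumping step gives $\bigcup_i\vec{x}_i=\mathcal{R}$, and budget exhaustion gives $p(\vec{x}_i)=\sum_j x_{ij}p_j=1$ (dumping zero-priced leftovers does not change this), so no agent overspends.

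I expect the crux to be exactly this passage from the continuous-utility equilibrium guarantee to the discontinuous single-minded objective; the cost dichotomy above is what carries it, and one must use that a single price vector certifies optimality for all agents simultaneously --- which is precisely what ``market equilibrium'' provides. Two degeneracies need a line each: an agent whose demand vector is zero (handled by hand, as above), and an agent all of whose demanded goods receive price $0$ (it then obtains each demanded good in full since supplies are $1\ge v_{ij}$, so $u_i(\vec{x}_i)\ge 1$ and $p(\vec{x}_i)=0\le 1$, and every conclusion still holds even though budget exhaustion is vacuous in that case). Finally, the ``polynomial time'' part follows by invoking the known polynomial-time solvability of the Eisenberg--Gale program \eqref{eq:ME} for a Leontief Fisher market.
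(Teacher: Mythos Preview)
Your proposal is correct and follows essentially the same route as the paper: build the Fisher market with unit budgets and Leontief utilities $u_i(\vec{z})=\min_{j:\,v_{ij}>0}\{z_j/v_{ij}\}$, take an Eisenberg--Gale equilibrium, and argue it is already a CAEI for the single-minded instance via the equivalence $V_i(\vec{z})=1\Leftrightarrow u_i(\vec{z})\ge 1$. Your cost dichotomy on $c_i=\sum_j v_{ij}p_j$ is just a forward-direction rephrasing of the paper's contrapositive argument (``if $i$ is unhappy but could afford $D_i$, the Leontief equilibrium is contradicted''), and you are in fact slightly more careful than the paper in handling zero-priced leftover goods and the two degeneracies.
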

\begin{proof} 
Given a fair division problem $(N, M, \mathcal{D})$, we consider a Fisher market where the set of buyers is $N$ and the set of items is $M$.
Each buyer $i$ has budget $B_i = 1$ and \emph{Leontief utility} described by the vector $D_i = \langle v_{i,1}, \ldots, v_{i,m}
\rangle$, i.e. for any bundle $z$, we have $$u_i(\vec{z}) = \min_{j \in M: v_{i,j} >
0} \left \{ \frac{z_j}{v_{i,j}} \right\}.$$ Recall that Fisher markets with Leontief utilities always have exact market equilibria.

Let $(\vec{x}, \vec{p})$ be any such equilibrium, where $x_{i,j}$ is the fraction received by agent $i$ from good $j$ and $p_j$ is the
price of good $j$. Denote the Leontief utility of buyer $i$ in the market by $u_{i}^{\mathcal{L}}(\vec{x}_i, \vec{p}) = \min_{j \in M:
v_{i,j} > 0} \left\{\frac{x_{i,j}}{v_{i,j}}\right\}$.  We argue the market equilibrium $(\vec{x}, \vec{p})$ is a CAEI solution for the
fair division problem with single minded agents. To this end, we must show that all the items are allocated, each agent spends no
more than a unit, and gets in exchange an optimal bundle at those prices. Clearly the first two requirements are met since $(\vec{x},
\vec{p})$ is a market equilibrium in the Fisher market with identical budgets.

We additionally show that each agent gets an optimal bundle in the fair division problem. For every agent $i$, if the allocation
$\vec{x}$ satisfies the property that $x_{i,j} \geq v_{i,j}$ for all $j \in M$, then $V_i(\vec{x}_i) = 1$ and $i$ gets its demand at
these prices. Otherwise, there is an item $k$ with $v_{i,k} > 0$ but $x_{i,k} < v_{i,k}$; then agent $i$ does not get its demand,
thus $V_i(\vec{x}_i) = 0$. Since $(\vec{x}, \vec{p})$ is a market equilibrium with respect to the Leontief utilities given by
$\vec{v}$, we have that $$u_{i}^{\mathcal{L}}(\vec{x}_i, \vec{p}) \leq \frac{x_{i,k}}{v_{i,k}} < 1.$$

Assume by contradiction that in the fair division problem agent $i$ could afford its demand set at prices $\vec{p}$, i.e. 
$\exists \vec{y} \in [0,1]^{m}$ with $y_{j} \geq v_{i, j}$ $\forall j \in M$, and $\vec{p}(\vec{y}) \leq 1$. Then in the Fisher
market with Leontief utilities, buyer $i$ could also purchase bundle $\vec{y}$ and get:
$$u_{i}^{\mathcal{L}}(\vec{y}, \vec{p}) =  \min_{j : v_{i,j} > 0} \left\{\frac{y_{j}}{v_{i,j}} \right\} \geq \min_{j : v_{i,j} > 0}
\left\{\frac{v_{i,j}}{v_{i,j}} \right\} = 1
> u_{i}^{\mathcal{L}}(\vec{x}_i, \vec{p}),$$
which is a strict improvement
over $\vec{x}_i$, contradicting that $(\vec{x}, \vec{p})$ is a market equilibrium in the Fisher market.
Thus the assumption was false and $(\vec{x}, \vec{p})$ is a CAEI.

Finally, a Fisher market equilibrium can be computed in polynomial time via a convex program formulation 
(Equation \ref{eq:ME}; see Codenotti and Varadarajan \cite{CV04} for more details). As argued above, this algorithm can be used to compute a CAEI solution for single minded agents. 
\end{proof}

\newpage
As the next example illustrates, the solution computed in Theorem~\ref{thm:existence_divisible} is not necessarily optimal.

\begin{example}
Consider 2 agents and 2 goods, where the demand of agent 1 is $D_1 = \langle 0.5, 0.4 \rangle$ and 
of agent 2 is $D_2 = \langle 0, 0.6 \rangle$.
The CAEI solution from Theorem~\ref{thm:existence_divisible} prices good 1 at $p_1 = 0$ and good 2 at $p_2 = 2$, and the allocation splits  good 2 equally between the two agents. This way agent 2 gets zero utility, while agent 1 gets utility 1 (and some extra good that is of no use).

Instead, another possible CAEI solution is to set prices $p' = \langle 1/3, 5/3 \rangle$, where agent 2 can now afford a quantity of
0.6 from good 2, and agent 1 still gets its demand. Thus the CAEI solution computed via the Leontief market equilibrium is dominated
by another CAEI solution with better welfare, which is not a solution to the program for Leontief market equilibrium.
\end{example}

We show that while maximizing social welfare is in general NP-hard, the problem can be solved in polynomial time when the number of
\emph{types} of agents is constant. This allows handling possibly large numbers of agents when their demands fit some standard
templates for resource requests.

\begin{theorem} \label{thm:SW_multiple_divisible_nphard}
Computing a CEEI solution that maximizes social welfare for single minded valuations with divisible goods is NP-hard.
\end{theorem}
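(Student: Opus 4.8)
The plan is to reduce from \textsc{Independent Set}.

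First I would prove a structural fact that removes the allocation from the picture: in \emph{any} CEEI $\langle \vec{p},\vec{x}\rangle$ for single minded divisible goods, agent $i$ is satisfied ($V_i(\vec{x}_i)=1$) if and only if it can afford its demand, i.e.\ $\vec{p}\cdot D_i\le 1$. Indeed, if $\vec{p}\cdot D_i\le 1$ then the bundle $D_i$ is itself affordable and has utility $1$, so the best utility $i$ can attain within its budget is $1$; since in a CEEI $\vec{x}_i$ maximizes utility subject to the budget \emph{and} exhausts it, $\vec{x}_i$ must coordinatewise dominate $D_i$. Conversely, if $\vec{x}_i\ge D_i$ then $\vec{p}\cdot D_i\le\vec{p}\cdot\vec{x}_i=1$. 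Hence $SW(\langle\vec{p},\vec{x}\rangle)=|S(\vec{p})|$ with $S(\vec{p}):=\{i:\vec{p}\cdot D_i\le 1\}$, and since the bundles of the agents in $S(\vec{p})$ are disjoint pieces of the unit supply each dominating its owner's demand, $S(\vec{p})$ is a \emph{feasible packing}: $\sum_{i\in S(\vec{p})}D_i\le\mathbf{1}$ coordinatewise. So the maximum welfare over CEEI equals the size of the largest feasible packing that is realizable by prices supporting an equilibrium.

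Next I would build the gadget. Given $G=(V,E)$ with $|V|=N$, create an agent $A_v$ and a good $g_v$ for every $v\in V$, plus one extra ``slack'' good $u$; let $A_v$ demand a $(1-\delta/2)$-fraction of $g_v$ and a $\delta$-fraction of $g_w$ for every neighbor $w$ of $v$, where $\delta:=1/N^2$. For an edge $\{v,w\}$ the good $g_w$ is then demanded in total $(1-\delta/2)+\delta>1$ by $A_v$ and $A_w$ together, so they cannot both be satisfied; while if $S\subseteq V$ is independent then every good is demanded by $\{A_v:v\in S\}$ in total at most $\max\{1-\delta/2,\ N\delta\}<1$. So the feasible packings of this instance are exactly the independent sets of $G$, and the claim becomes $\max_{\mathrm{CEEI}}SW=\alpha(G)$.

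The easy inequality is ``$\le$'': any CEEI has welfare $|S(\vec{p})|$ with $S(\vec{p})$ a feasible packing, hence an independent set, so $SW\le\alpha(G)$. The hard part will be ``$\ge$'': turning a maximum independent set $S^{\ast}$ into an honest CEEI of welfare $|S^{\ast}|$. I would price $g_v$ at $0$ for $v\in S^{\ast}$ and at $1+1/N$ for $v\notin S^{\ast}$; then $\vec{p}\cdot D_{A_v}\le N\delta(1+1/N)<1$ for $v\in S^{\ast}$ and $\vec{p}\cdot D_{A_v}\ge(1-\delta/2)(1+1/N)>1$ for $v\notin S^{\ast}$, so $S(\vec{p})=S^{\ast}$. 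Allocate $D_{A_v}$ exactly to each $v\in S^{\ast}$ (feasible because $S^{\ast}$ is independent), set $p_u:=N-\sum_{v\notin S^{\ast}}(1+1/N)$, which is positive since $\alpha(G)\ge 1$, and then distribute all leftover goods together with $u$ so that each agent spends exactly $1$. A one-line accounting identity shows total leftover value equals total unspent budget, so this last redistribution is a feasible (divisible) transportation; consequently every positively priced good is fully sold and every bundle is utility-maximal within budget (agents in $S^{\ast}$ reach utility $1$, the rest reach the unavoidable $0$). This is a CEEI of welfare $|S^{\ast}|=\alpha(G)$, and the whole reduction is plainly polynomial.

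The main obstacle, and essentially the only place care is needed, is that last step: it does not suffice to find prices under which exactly $S^{\ast}$ is affordable; one must check these prices extend to a genuine competitive equilibrium, i.e.\ that market clearing and full budget exhaustion can hold at once. The two devices that make this go through are (i) pricing the goods owned by $S^{\ast}$ at zero, which keeps the total of all prices below the aggregate budget $N$, and (ii) the single slack good $u$, which absorbs precisely the residual budget; the ``one good per vertex'' structure is what lets one flip any single agent between affordable and unaffordable without disturbing the others.
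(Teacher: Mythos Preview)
Your argument is correct. The structural lemma, the packing characterisation, and the price/allocation construction all check out; in particular the accounting identity you invoke does hold (total leftover value and total unspent budget both equal $N-(1+1/N)\,\delta\sum_{v\in S^\ast}d_v$), and a proportional split of the positively priced leftover among the agents according to their unspent budgets simultaneously clears the market and exhausts every budget without pushing any coordinate of any bundle above~$1$.

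Your route, however, is genuinely different from the paper's. The paper reduces from \textsc{Set Packing} with \emph{integral} demands: agent $i$ wants $100\%$ of every item in $C_i$ together with a private item $m+i$; the private item is what makes each unsatisfied agent's demand individually over-priceable. This is short and reuses the indivisible-goods machinery, but it glosses over the budget-exhaustion requirement that distinguishes CEEI from CAEI. You instead reduce from \textsc{Independent Set} with carefully chosen \emph{fractional} demands (the $(1-\delta/2,\delta)$ scheme), one good per vertex, and a single global slack good~$u$. What your approach buys is an explicit, self-contained verification of the full CEEI conditions: the slack good absorbs exactly the residual budget, so you never have to appeal to any external existence result. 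The paper's approach buys brevity and a direct link to the indivisible-goods case; yours buys rigour on precisely the point (full budget clearing) where the two solution concepts differ.
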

\begin{proof}
We use a reduction from the NP-complete problem $\textsc{SET}$ $\textsc{PACKING}$: 
\begin{quote}
\emph{Given a collection $\mathcal{C} = \langle C_1, \ldots, C_n\rangle$ of finite sets and a positive integer $K \leq n$, does $\mathcal{C}$ 
contain at least $K$ mutually disjoint sets?
}
\end{quote}
Given collection $\mathcal{C}$ and integer $K$, let $\mathcal{M}$ be a fair division problem with agents $N = \{1, \ldots, n\}$, items $M = \{1, \ldots, m+n\}$, and demand sets such that each agent $i \in N$ wants $100\%$ of each item in the set $C_i \cup \{m+i\}$. In other words, the fair division problem on such instances can be seen as equivalent to that of allocating multiple indivisible goods (where $Q_j = 1$ for each good $j$ and agent $i$ can get either $0\%$ or $100\%$ of each good.

It can be checked that $\mathcal{M}$ has a CAEI solution with social welfare at least $K$ if and only if $\mathcal{C}$ has a disjoint collection of at least $K$ sets.
\end{proof}

\begin{theorem}
Given a fair division problem with single minded agents and divisible goods, a welfare maximizing CAEI solution can be found in polynomial time for a constant number of agent \emph{types}.
\end{theorem}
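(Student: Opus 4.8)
The plan is to reduce finding a welfare-maximizing CAEI to a constant number of linear feasibility tests. Say the agents fall into $k = O(1)$ \emph{types}, where type $t$ has demand vector $\langle w_{t,1},\dots,w_{t,m}\rangle$ and comprises $n_t$ agents, with $\sum_t n_t = n$. The first step is a structural reduction: in \emph{any} CAEI $\langle p,\vec x\rangle$, agent $i$ receives utility $1$ if and only if $\sum_{j\in M} p_j v_{i,j}\le 1$. Indeed, the cheapest bundle giving agent $i$ utility $1$ is $D_i$ itself, costing $\sum_j p_j v_{i,j}$, so agent $i$'s optimal attainable utility at prices $p$ on a unit budget is $1$ exactly when this quantity is $\le 1$ and $0$ otherwise, and the optimal-bundle requirement forces $\vec x_i$ to realize it (the no-overspending condition rules out the degenerate alternative). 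Since this threshold depends only on $p$ and on $i$'s type, all agents of a given type are satisfied together; hence the set of satisfied agents in any CAEI is the set of all agents whose type lies in some subset $T\subseteq\{1,\dots,k\}$, and the social welfare of such a CAEI is $\sum_{t\in T}n_t$.

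The second step is to characterize which type-sets $T$ are realizable. I claim there is a CAEI whose satisfied agents are exactly those of the types in $T$ if and only if (a) $\sum_{t\in T} n_t w_{t,j}\le 1$ for every good $j$ (the satisfied demands fit), and (b) there is a price vector $p\ge 0$ with $\sum_j p_j w_{t,j}\le 1$ for all $t\in T$, $\sum_j p_j w_{t,j} > 1$ for all $t\notin T$, and $\sum_j p_j\le n$. Necessity of (a) and of the separation in (b) is immediate from the first step, and necessity of $\sum_j p_j\le n$ follows by summing the no-overspending inequalities over all agents, since all goods are fully allocated so the total price of the goods is at most the total budget $n$. For sufficiency, given such a $p$ I would give each satisfied agent exactly its demand (cost $\le 1$ each) and then pour the leftover quantity of each good into the agents' residual budgets; the total leftover value $\sum_j p_j\bigl(1-\sum_{t\in T}n_t w_{t,j}\bigr)$ equals $\sum_j p_j$ minus the amount already spent, hence is at most $n$ minus the amount already spent, i.e. at most the total residual budget, so by divisibility this distribution succeeds with nobody overspending, and the resulting pair meets all three CAEI conditions. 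Checking (a) is trivial, and checking (b) is a linear feasibility problem; the strict inequalities I would handle by maximizing a slack variable $\delta$ subject to $\sum_j p_j w_{t,j}\ge 1+\delta$ for $t\notin T$ together with the closed constraints of (b), and testing whether the optimum is positive (this LP is bounded since $\delta\le\sum_j p_j\le n$; when $T=\{1,\dots,k\}$ there are no strict constraints and $p=\vec 0$ works).

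The third step is to optimize and output: enumerate all $2^k$ subsets $T$, keep those that are realizable (the set is nonempty by Theorem~\ref{thm:existence_divisible}), pick one maximizing $\sum_{t\in T}n_t$, and return the price vector produced by its LP together with the allocation from the sufficiency argument. Correctness is immediate from the first two steps, since every CAEI has a realizable satisfied type-set of welfare $\sum_{t\in T}n_t$ and conversely; the running time is $2^k$ times the cost of one polynomial-size LP, hence polynomial for constant $k$. I expect the main obstacle to be pinning down the realizability characterization in (b) exactly right --- in particular, verifying that ``allocate every good without anyone overspending'' is captured precisely by the single inequality $\sum_j p_j\le n$, and handling the open (strict) constraints so that the whole test is a bona fide LP rather than a limiting argument.
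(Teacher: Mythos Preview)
Your proposal is correct and follows the same high-level strategy as the paper: enumerate the $2^{|\mathcal T|}$ candidate ``happy'' type-sets, test each one by a linear program with a slack variable to handle the strict price-separation constraints, and output the best. Your structural step (all agents of a type are satisfied together, because satisfaction is decided by the affordability inequality $\sum_j p_j v_{i,j}\le 1$) is exactly the observation the paper relies on.

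Where you differ is in the LP itself. The paper's LP carries both prices $p_j$ and money-spending variables $m_{i,j}$ (so $O(nm)$ variables), with the constraints $m_{i,j}\ge p_j v_{i,j}$, $\sum_i m_{i,j}=p_j$, $\sum_j m_{i,j}\le 1$ directly encoding a full CAEI allocation alongside the prices. You instead strip the allocation out of the LP entirely and replace it by the single aggregate constraint $\sum_j p_j\le n$, together with the separate supply check $\sum_{t\in T} n_t w_{t,j}\le 1$; you then reconstruct the allocation afterwards by the ``give each happy agent its demand and pour leftovers proportionally to residual budgets'' argument. This buys you a smaller LP ($O(m)$ variables) and a clean if-and-only-if characterization of realizable type-sets, at the cost of having to justify the pouring step --- which you do correctly (the key inequality $V\le R$ is exactly $\sum_j p_j\le n$, and divisibility lets each agent absorb an $r_i/R$ share of every leftover good without overspending; the $R=0$ corner case is harmless since then every leftover good has price zero). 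The paper's formulation avoids that side argument by baking the allocation into the LP, at the price of more variables. Both routes are polynomial and sound; yours is a modest refinement rather than a different proof.
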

\begin{proof}
We are given a set $N = \{1, \ldots, n\}$ of agents, $M$ of items, valuations $\mathcal{D}$, and $\mathcal{T}$ a set of types of
agents, such that for each $i \in N$, $\tau(i) \in \mathcal{T}$ represents the type of agent $i$, and agents with same type have
identical valuation ($\tau(i)=\tau(i') \Rightarrow \vec{v_i} = \vec{v_{i'}}$). 
Algorithm \ref{alg:maxSW} solves this problem and runs in polynomial time for fixed $|\mathcal{T}|$. 

\begin{algorithm}[h!] \label{alg:maxSW}
  \SetAlgoLined\DontPrintSemicolon
\caption{\textsc{Compute-Max-CAEI}}
\KwData{Agents $N = \{1, \ldots, n\}$, items $M = \{1,$ $\ldots,$ $m\}$, valuations $\vec{v}$, types $\tau$}
\KwResult{Social welfare maximizing CAEI}
 $(\textsc{Opt}, \vec{x}^*, \vec{p}^*) \leftarrow (-\infty, \textsc{Null}, \textsc{Null})$ \\
\ForEach{$S \subseteq \mathcal{T}$} {
$(\vec{m}, \vec{p}) \leftarrow \textsc{Solve-Optimal-Subset}(S, \mathcal{M})$\\
$x_{i,j}=\frac{m_{i,j}}{p_j},\ \forall i\in N,\ \forall j \in M$\\
\If{$(\textsc{SW}(\vec{x}) > \textsc{Opt})$}{
$\textsc{OPT} \leftarrow \textsc{SW}(\vec{x})$\\
$(\vec{x}^*, \vec{p}^*) \leftarrow (\vec{x}, \vec{p})$\\
}
}
\textbf{Return} $(\vec{x}^*, \vec{p}^*)$
\end{algorithm}


\begin{center}
\begin{alignat*}{3}
    & \text{maximize  }   & \epsilon\  \\
    & \text{subject to} & \sum_{j=1}^{m} p_j \cdot v_{i,j} \leq 1, &  & \; \forall i \in N: \tau(i) \in S\\
           &           & \sum_{j=1}^{m} p_j \cdot v_{i,j} \geq 1 + \epsilon, & &\;  \forall i \in N: \tau(i) \in \bar{S}\\
             &         & m_{i,j} \geq p_j \cdot v_{i,j}, & & \;\ \  \forall j \in M, \forall i: \tau(i) \in S\\
		&      & \sum_{i=1}^{m} m_{i,j} = p_j, & & \;  \forall j \in M\\
		  &    & \sum_{j=1}^{m} m_{i,j} \leq 1, & & \; \forall i \in N\\
		  &    & \vec{p}\ge 0;\ \ \ \vec{m}\ge 0;\ \ \  
		  \epsilon \geq 0
  \end{alignat*}
\end{center}

Procedure \textsc{Solve-Optimal-Subset} is given as a linear program, where $p_j$ is the
price of good $j$, $m_{i,j}$ is the amount of money spent by agent $i$ on good $j$, and $\epsilon$ is a variable that should be
strictly positive in the optimal solution if the set $S$ of agent types can be made simultaneously happy. 

The observation underlying the algorithm is that agents of the same type must have identical utilities (but not necessarily identical bundles). The \textsc{Compute-Max-Caei} procedure tries to compute a CAEI solution for every set $S$ of types so that agents of type $\tau \in S$ are satisfied,
while agents outside $S$ cannot afford their demand. The LP constraints ensure these conditions are met and a positive solution is found if and only if there exists a CAEI solution for the set $S$. At the end, the \textsc{Compute-Max-Caei} algorithm takes the maximum over all $S$. 
\end{proof}

\section{Cake Cutting} \label{sec:cake_cutting}

Next we investigate the cake cutting problem with single minded agents. 
Cake cutting models the problem of allocating a heterogeneous divisible resource, such as land, time, mineral deposits or computer memory, among agents with different preferences. The ``cake'' is represented mathematically as the interval $[0,1]$ and the agents have different preferences over the interval. A ``piece'' of cake $A$ is a union of intervals: $A = (I_1, \ldots, I_m)$.

The literature on cake cutting has seen very interesting algorithmic developments in recent years (see, e.g.,~\cite{KLP13,SHA15,AM16}), all of which are concerned with additive valuations in the one-dimensional model. Among the 
few exceptions we mention: \cite{CLP11} studied fair division with additive valuations constrained by a minimum length requirement (PUML),
\cite{BPZ13} studied externalities in cake cutting, while \cite{HHA15,HNHA15},  studied cake cutting in two dimensions, where the agents also care about the shape of the pieces that they receive. In earlier work, a different multidimensional model of fair division has been explored under the name of pie-cutting by Brams et al.~\cite{BJK08}, while a very general model of fair resource division has been explored in the works of Dall’Aglio and Maccheroni~\cite{DM09} and of Husseinov and Sagara~\cite{HS13}, who prove the existence of fair and efficient allocations under mild conditions assuming that the valuations are continuous. However, single minded valuations are discontinuous, and so their results do not directly imply ours.

In this paper we focus on single minded agents, 
and so each agent $i$ will demand a set $D_i$ consisting of possibly several disjoint intervals: $D_i = (D_{i,1}, \ldots, D_{i,
m_i})$. The utility of agent $i$ for a piece of cake $A \subseteq [0,1]$ is: $V_i(A) = 1$ if $D_i \subseteq A$ and $V_i(A) = 0$
otherwise.

Examples of single minded agents in cake cutting include cases where the agent requires land that has buildings on it (with some particular
functionality) or wants to build a house in a particular location. The case where each single minded agent requires a contiguous piece ({\em i.e.,} $m_i=1$)
can in fact be mapped to a standard problem known as ``interval scheduling'' in operations research.
An instance of 3 agents with single minded valuations in cake cutting is illustrated in Figure~\ref{pic:cake}. 

\begin{figure}[!t]
    \centering
    \includegraphics[width=0.43	\textwidth]{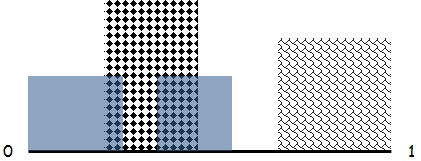}
    \caption{Three agents with single minded valuations; the intervals required by each agent are shown through shaded areas of
    different filling.} \label{pic:cake}
\end{figure}

\vspace{2mm}
We show that a CAEI solution always exists and can be computed efficiently in this model.

\begin{theorem}
A CAEI solution is guaranteed to exist and can be computed in polynomial time in the cake cutting problem with single minded agents.
\end{theorem}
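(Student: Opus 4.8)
The plan is to reduce the problem to the multiple divisible goods case and invoke Theorem~\ref{thm:existence_divisible}. Given a cake instance with demands $D_i = (D_{i,1}, \ldots, D_{i,m_i})$, I collect all endpoints of all the intervals $D_{i,k}$ and use them to cut $[0,1]$ into elementary intervals (``atoms'') $J_1, \ldots, J_T$, where $T \le 2\sum_i m_i + 1$ is polynomial in the input; let $\ell_t > 0$ be the length of $J_t$ (atoms of length zero, or demands consisting of isolated points, carry measure and price zero and may be dropped). By construction, for every $i$ and $t$ either $J_t \subseteq D_i$ or $J_t$ is disjoint from $D_i$ up to a measure-zero set. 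I then build an instance of the multiple divisible goods problem with single minded agents $N$ and goods $\{J_1, \ldots, J_T\}$, where agent $i$ demands fraction $v_{i,t} = 1$ of good $J_t$ if $J_t \subseteq D_i$ and $v_{i,t} = 0$ otherwise.

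By Theorem~\ref{thm:existence_divisible} this divisible instance admits a CAEI $\langle \vec{p}, \vec{x} \rangle$ computable in polynomial time, where $p_t$ is the per-unit price of atom $J_t$ and $x_{i,t}$ the fraction of $J_t$ received by agent $i$. I translate this back to the cake as follows: the price density is taken constant on each atom, $p(y) = p_t/\ell_t$ for $y \in J_t$, so that $p(J_t) = p_t$ and $p(D_i) = \sum_{t \,:\, J_t \subseteq D_i} p_t = \sum_{t} p_t v_{i,t}$. For the allocation, within each atom $J_t$ hand agent $i$ an arbitrary sub-interval of length $x_{i,t}\ell_t$; this is feasible since $\sum_i x_{i,t} \le 1$, and any unallocated remainder of $J_t$ (which, by market clearing, can occur only when $p_t = 0$) is given to an arbitrary fixed agent.

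It remains to check the three CAEI conditions. The whole cake is allocated by construction, and no agent overspends since $p(\vec{x}_i) = \sum_t x_{i,t} p_t = B_i = 1$. For optimality of the bundles: if agent $i$ is satisfied in the divisible CAEI then $v_{i,t} = 1$ together with $\sum_{i'} x_{i',t} \le 1$ forces $x_{i,t} = 1$ for every atom $J_t \subseteq D_i$, so in the translated allocation agent $i$ receives each such atom entirely and $D_i \subseteq \vec{x}_i$, giving utility $1$, which is trivially optimal. If agent $i$ is unsatisfied in the divisible CAEI, then by the argument inside the proof of Theorem~\ref{thm:existence_divisible} we have $\sum_t p_t v_{i,t} > 1$, i.e.\ $p(D_i) > 1$; since the cheapest piece of cake containing $D_i$ is $D_i$ itself, agent $i$ cannot afford any piece of value $1$, so its utility $0$ is optimal. (The leftover step may occasionally upgrade an unsatisfied agent to utility $1$, but then its bundle is again trivially optimal and the extra cake costs nothing, so the budget constraint is untouched.)

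The only delicate point is turning the satisfied agents' fractional allocations into actual pieces: the guarantee that no two agents are both assigned the whole of some atom is precisely where the conflict between agents with overlapping demands is resolved, and it is enforced for free by the market-clearing inequality $\sum_i x_{i,t} \le 1$. Everything else is bookkeeping --- the refinement has polynomially many atoms, the algorithm of Theorem~\ref{thm:existence_divisible} runs in polynomial time, and the back-translation is linear --- so a CAEI for the cake instance is obtained in polynomial time.
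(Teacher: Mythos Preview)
Your proof is correct, but it follows a genuinely different route from the paper's. The paper reduces the cake instance to an \emph{indivisible} goods problem: it bisects every demanded interval $D_{i,j}$ so that no agent ends up with a singleton demand, treats the resulting subintervals as indivisible items, and then invokes the external result of~\cite{BHM15} that CEEI exists for single-minded indivisible-goods instances with no singleton demands. You instead reduce to the \emph{divisible} goods problem and invoke Theorem~\ref{thm:existence_divisible} directly on the atoms, with $v_{i,t}\in\{0,1\}$.

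Your approach has two advantages: it is self-contained (it uses only results already proved in the paper rather than an external reference), and it avoids the somewhat ad hoc bisection trick the paper needs to rule out singleton demands. The paper's route, on the other hand, yields slightly more --- it actually produces a CEEI (budgets exhausted) for the discretized instance, whereas your argument only delivers a CAEI; but since CAEI is the target statement, this extra strength is not needed here. One small remark: your line ``$p(\vec{x}_i)=\sum_t x_{i,t}p_t = B_i = 1$'' tacitly relies on the fact that Theorem~\ref{thm:existence_divisible} is proved via a Fisher equilibrium in which budgets are spent exactly; it would be cleaner to write $\leq 1$, which is all the CAEI definition requires.
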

\begin{proof}
For each $i \in N$, let $D_i = (D_{i,1}, \ldots, D_{i, m_i})$ be the disjoint intervals in its demand set, where $D_{i,j} = [l_{i,j},
r_{i,j}] \subseteq [0,1]$. Divide each interval $D_{i,j}$ into two equal pieces, and let $F_{i,j} = \left\{l_{i,j}, \frac{l_{i,j} +
r_{i,j}}{2}, r_{i,j}\right\}$ be the set of points resulting from this partition of $D_{i,j}$. Consider the set of all the distinct
points thus obtained: 
$\mathcal{P} = \left( \bigcup_{i=1}^{n}\right.$$\left. \bigcup_{j=1}^{m_i}\right.$$\left.F_{i,j} \right) \cup \{0, 1\}$ $=$ $\{q_0$,
$q_1$, $\ldots$, $q_m\}$, where $0=q_0$ $<$ $q_1$ $<$ $\ldots$ $<$ $q_m = 1$. Then we can view every segment $I_k = [q_k, q_{k+1}]$
delimitated by two consecutive points in
$\mathcal{P}$ as an indivisible good; moreover, good $I_k$ is in the demand set of each agent $i$ for which $[q_k, q_{k+1}] \subset
D_i$. This gives a related allocation problem with $m$ indivisible goods and $n$ agents, where no agent has a demand set consisting
of a single good. The latter property follows from the fact that we partitioned each interval $D_{i,j}$ into two pieces, which are
viewed as distinct goods. From~\cite{BHM15}, a competitive equilibrium from equal incomes is guaranteed to exist and can be computed in
polynomial time for every fair division problem with indivisible goods and single minded agents, where no agent has a demand set with
only one good; let $(\vec{x}, \vec{p})$ be such an allocation and prices.

Then we can compute a CAEI solution in cake cutting by allocating the intervals in $\mathcal{P}$ in the same order (from left to right)
as the indivisible goods are allocated under $\vec{x}$ and setting the price curve of each interval $[q_k, q_{k+1}]$ uniformly such
that it sums up to the price of the corresponding indivisible good.  
\end{proof}

\vspace{2mm}
When the demands of the agents are contiguous, a welfare maximizing CAEI solution can be computed in polynomial time; the proof exploits a connection with the interval scheduling problem.

\begin{theorem} \label{thm:cake_contiguous_SW}
A welfare-maximizing CAEI can be computed in polynomial time in cake cutting with single minded agents and contiguous
demands. Moreover,  when there
are no identical agents, the welfare maximizing CAEI coincides with the solution to the pure optimization problem of maximizing welfare.
\end{theorem}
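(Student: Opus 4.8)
The plan is to reduce both the pure welfare‑maximization problem and the welfare‑maximizing‑CAEI problem to interval scheduling, and to show that the two coincide exactly when no two agents are identical. First I would settle the \emph{pure} optimum. An allocation gives utility $1$ to a set of agents $S$ if and only if the demanded intervals $\{D_i\}_{i\in S}$ are pairwise disjoint (if they are, allocate $D_i$ to each $i\in S$ and split the remaining cake arbitrarily; conversely disjointness is forced because the $\vec{x}_i$ partition $[0,1]$). Hence $\max_{\vec{x}} SW(\vec{x})$ equals the maximum number of pairwise‑disjoint demand intervals, i.e. the classical interval‑scheduling problem, solved by the earliest‑finishing‑time greedy in $O(n\log n)$; call this optimum $k^{*}$. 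I would also record that an optimal schedule may be assumed to consist of \emph{minimal} demand intervals: a scheduled interval strictly containing another demand interval can be replaced by a minimal demand interval lying inside it, without losing disjointness or cardinality.

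Next I would pin down the necessary structure of a CAEI. In any CAEI $\langle p,\vec{x}\rangle$, a satisfied agent $i$ has $D_i\subseteq\vec{x}_i$, hence $p(D_i)\le p(\vec{x}_i)\le 1$; and an unsatisfied agent $i$ must have $p(D_i)>1$, since otherwise the bundle $D_i$ is affordable and yields utility $1$, contradicting optimality of $\vec{x}_i$. I would then observe that if $i$ is satisfied, no other agent $i'$ has $D_{i'}\subseteq D_i$: if it did, then $p(D_{i'})\le p(D_i)\le 1$, so $i'$ is satisfied as well, forcing the positive‑length set $D_{i'}$ into $\vec{x}_i\cap\vec{x}_{i'}=\emptyset$. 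Call agent $i$ \emph{eligible} if no other agent's demand is contained in $D_i$. Thus the satisfied set of any CAEI is a pairwise‑disjoint set of eligible agents, so the welfare of any CAEI is at most the size of a maximum pairwise‑disjoint set of eligible agents (in particular at most $k^{*}$). When there are no identical agents, ``eligible'' is exactly ``$D_i$ minimal'', so this upper bound already equals $k^{*}$.

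It remains to build, for a maximum pairwise‑disjoint set $S$ of eligible agents (found by interval scheduling restricted to the eligible agents), a CAEI whose satisfied set is $S$; this is encoded, after discretizing the cake at all demand endpoints, as a polynomial‑size linear program analogous to the subroutine \textsc{Solve-Optimal-Subset} (whose market‑clearing constraints already force the total price to be at most $n$), and the task is to prove it feasible with a strictly positive slack. I would assemble the price in two layers: a \emph{base} layer $q$ coming from LP duality of interval scheduling on the eligible agents --- a nonnegative density with $q(D_i)\ge 1$ for every eligible $i$, total mass $|S|$, and $q(D_j)=1$ for $j\in S$ by complementary slackness --- plus a \emph{correction} layer $r$ supported on the overhang regions $D_i\setminus\bigcup_{j\in S}D_j$ of the non‑selected agents. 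The point is that for every $i\notin S$ this overhang is nonempty (eligibility of $D_j$ forces $D_i\not\subseteq D_j$ for each $j\in S$, and the $D_j$ are disjoint, so a connected $D_i$ not inside one of them pokes out of their union), so loading enough mass there makes $p(D_i)=q(D_i)+r(D_i)>1$ without disturbing any $p(D_j)=1$. A counting argument over the $n$ budget ``slots'' of capacity $1$ keeps the total price at most $n$ while still making every non‑selected demand strictly expensive; then, since the price measure is non‑atomic, one slices $[0,1]$ so that each $j\in S$ receives a piece containing $D_j$ and every agent pays at most $1$, yielding a CAEI of welfare $|S|$. All steps are interval scheduling or a polynomial‑size LP, so the algorithm is polynomial; and when no two agents are identical, $|S|=k^{*}$, so the welfare‑maximizing CAEI attains the unconstrained optimum.

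I expect the genuine obstacle to be exactly this correction‑layer book‑keeping: simultaneously (i) keeping each selected demand priced at most $1$, (ii) forcing every other demand strictly above $1$, and (iii) keeping the total price at most $n$ so the whole cake is still partitionable with nobody overspending. The combinatorial fact that unlocks it --- every non‑selected demand interval sticks out of the union of the disjoint selected intervals --- is precisely where eligibility of the selected agents is used, and it is also what forces the ``no identical agents'' hypothesis in the second half: by the same $p(D)\le 1$ versus $p(D)>1$ dichotomy, two agents with the same positive‑length demand can never both be satisfied in a CAEI, hence (one being satisfied would make the other satisfied too) neither can, so the welfare‑maximizing CAEI may fall strictly below $k^{*}$ in that case.
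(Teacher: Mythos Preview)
Your upper-bound analysis is clean and actually sharper than the paper's: the observation that in any CAEI only \emph{eligible} agents (those whose demand interval contains no other agent's demand) can be satisfied, and hence the satisfied set must be a pairwise-disjoint family of eligible agents, is correct and elegant. The reduction of the pure optimum to interval scheduling is also fine.

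The gap is in the construction. Your key feasibility claim is that for every $i\notin S$ the overhang $D_i\setminus\bigcup_{j\in S}D_j$ is nonempty, because ``a connected $D_i$ not inside one of them pokes out of their union.'' This fails precisely when two selected intervals are adjacent. Take $n=3$ with $D_1=[0,0.5]$, $D_2=[0.5,1]$, $D_3=[0.25,0.75]$. All three agents are eligible; the maximum disjoint set among them is $S=\{1,2\}$. Now $D_1\cup D_2=[0,1]$, so $D_3\setminus(D_1\cup D_2)=\emptyset$: there is no overhang on which to load your correction layer $r$. Your base layer $q$ (the LP dual) gives only $q(D_3)\ge 1$, and in this instance the natural dual (uniform density~$2$) gives $q(D_3)=1$ exactly, so you cannot push $p(D_3)$ strictly above $1$ by your scheme. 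Yet a CAEI with satisfied set $\{1,2\}$ does exist here---the issue is that your pricing construction, not the target, is wrong.

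What the paper does instead is an explicit \emph{asymmetric} pricing that sidesteps the overhang problem entirely: for the $k$-th scheduled agent it places mass $k\epsilon$ in a tiny interval at the left endpoint of $D_k$ and mass $1-k\epsilon$ in a tiny interval at the right endpoint. Then any unscheduled agent $i$ that was eliminated when $j$ was scheduled necessarily straddles the right end of $D_j$ (cost $1-j\epsilon$) and, if it reaches into the next selected interval $D_{j+1}$, also its left end (cost $(j+1)\epsilon$), for a total of at least $1+\epsilon$. In the example above, $p(D_3)\ge(1-\epsilon)+2\epsilon=1+\epsilon>1$. This telescoping trick is the piece your argument is missing; if you want to keep the LP-duality framing, you need to replace the overhang correction by something that exploits the \emph{left/right asymmetry} of the greedy schedule rather than relying on gaps between selected intervals.
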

\begin{proof}
The idea is to leverage a connection with the interval scheduling problem~\cite{KT05} to first decide the agents that get allocated,
and then to construct an \emph{asymmetric} pricing scheme to implement a price equilibrium.

The interval scheduling problem is as follows. There are $n$ jobs to be run on a supercomputer, where each job $i$ runs from time $s_i$ to time $f_i$. There are multiple such requests arriving
simultaneously, and the goal is to process as many of them as possible, but the computer can only run one job at a time. The question
is how to schedule the jobs so that the maximum number of requests is served. The optimal allocation---that maximizes the number of jobs scheduled---is given by a greedy algorithm: Schedule first the job
with the earliest finishing time, then remove the jobs intersecting with it, and repeat among the remaining jobs.

This problem can be mapped to cake cutting with single minded agents by considering an agent for each job,
i.e. $N = \{1, \ldots, n\}$, and setting the demand of each agent $i$ to the schedule of job $i$, that is $D_i = \{[s_i, f_i]\}$,
where $s_i$, $f_i$ are normalized in $[0,1]$.

While our aim is to indeed maximize the number of completed jobs, we want to achieve this in a way that is fair. That is, we
will compute an allocation of the cake $\vec{x}$ and a price curve $\vec{p}$ such that $(\vec{x}, \vec{p})$ represent a CAEI solution,
where $p(x)$ is the value of the price density function at point $x$. 
This can be accomplished by allocating the agents in the same name order that the greedy algorithm allocates the corresponding jobs, with the caveat
that if multiple agents have the same finishing time, then the one with the \emph{latest} starting time must be selected.
The agents that intersect with the last scheduled agent will not get their demand set.

For each $k$-th agent that gets selected to receive their full demand, we select two very small intervals at the left and right endpoints of their demanded interval, respectively, and price 
them (uniformly) to sum up to $k \cdot \epsilon$ and $1 - k \cdot \epsilon$, respectively.
When there are no identical demands, the agents that don't get served but for which there is an unallocated piece of their demand, can receive a 
small such piece (at a total price of 1); otherwise, they get nothing (but their demand now has the property that it costs more than 1). This allocation 
can be done so that the whole cake is divided.
 
Then to handle identical agents, the algorithm can be modified such that whenever selecting the earliest finishing time and finding
multiple agents with the same finish \emph{and} start time, then take a very small interval at the leftmost point of the unallocated
cake and divide it equally among the identified identical agents, setting the price uniformly to sum up to their budgets. Then iterate
on the remaining cake. 
\end{proof}

An example with four agents is given in Figure 2; note that in this picture, for simplicity of presentation, the $k$th scheduled agent has the areas costing $k \epsilon$ and $1 - k \epsilon$ distributed over its entire demand---this is allowed by the algorithm. However, generally one can select two arbitrarily small intervals (prefix and suffix of agent $k$'s demand) and only price these.

\begin{figure}[!t]
    \centering
    \includegraphics[width=0.8\textwidth]{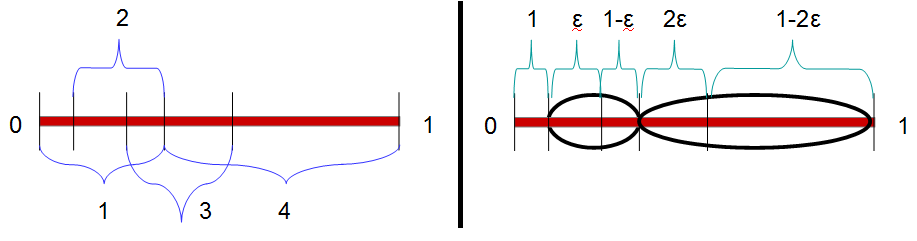}
    \caption{Four agents with contiguous single minded demands, indicated by the blue brackets on the intervals (left). On the right, the two agents selected in the optimal schedule are circled and the prices marked on top of each differently priced interval (the price listed indicates the price of the whole interval for the corresponding green bracket).} \label{pic:cake}
\end{figure}

\vspace{2mm}
On the other hand, if the demands are not contiguous, then the problem of computing a welfare maximizing CAEI is NP-hard. 
\begin{theorem} \label{thm:SW_cake_nphard}
Given a cake cutting problem with single minded valuations, computing a welfare-maximizing CAEI is NP-hard.
\end{theorem}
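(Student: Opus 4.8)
The plan is to reduce from $\textsc{Set}$ $\textsc{Packing}$, adapting the private-good gadget used in the proof of Theorem~\ref{thm:SW_multiple_divisible_nphard}. Given a collection $\mathcal{C}=\langle C_1,\dots,C_n\rangle$ of nonempty subsets of a ground set $\{1,\dots,m\}$ and an integer $K\le n$, I build a cake-cutting instance with agent set $N=\{1,\dots,n\}$: cut $[0,1]$ into $m+n$ elementary intervals $I_1,\dots,I_{m+n}$ and let the single minded demand of agent $i$ be $D_i=\bigl(\bigcup_{j\in C_i} I_j\bigr)\cup I_{m+i}$, so $I_{m+i}$ is a \emph{private} interval wanted only by $i$. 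Since $C_i$ need not be an interval of consecutive ground elements, $D_i$ is in general non-contiguous, so this stays outside the tractable regime of Theorem~\ref{thm:cake_contiguous_SW}. Because a price curve may be taken constant on each $I_j$, the instance behaves like a fair-division problem over $m+n$ indivisible single minded goods in which every demand contains at least two goods; hence a CAEI --- and therefore a welfare-maximizing one --- exists (this also follows from the explicit construction below, or from the result of~\cite{BHM15} already invoked above). As computing a welfare-maximizing CAEI lets one read off its social welfare, it suffices to show that this optimal welfare equals the size of a maximum packing of $\mathcal{C}$; the construction is clearly polynomial.

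One inequality is immediate. In any CAEI $\langle p,\vec{x}\rangle$ the bundles are pairwise disjoint, so if $SW(\vec{x})=t$ then there are $t$ agents with $D_i\subseteq\vec{x}_i$ whose demand sets, and hence whose underlying sets $C_i$, are pairwise disjoint; thus $SW(\vec{x})$ never exceeds the maximum packing size.

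For the converse I must realize a packing as a CAEI. Take any packing of size $\ge K$ and enlarge it to a \emph{maximal} packing $P$ (still of size $\ge K$); let $S=\bigcup_{i\in P}C_i$. Allocate to each $i\in P$ its whole demand $D_i$, give each $i\notin P$ its private interval $I_{m+i}$, and assign every remaining interval $I_j$ with $j\notin S$ to some (necessarily unsatisfied) agent that wants it; such an agent exists when the ground set equals $\bigcup_i C_i$, and otherwise the interval may be handed to agent $1$. For prices, set $p(I_j)=\tfrac1{2m}$ for each $j\in S$ and $p(I_j)=0$ for each $j\notin S$; for $i\in P$ price the private interval $I_{m+i}$ so that $p(D_i)=1$ exactly, which is feasible and yields a price in $(0,1)$ since $p\bigl(\bigcup_{j\in C_i}I_j\bigr)=\tfrac{|C_i|}{2m}\le\tfrac12$; for $i\notin P$ set $p(I_{m+i})=1-\tfrac1{4m}$. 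Then each $i\in P$ gets utility $1$ at total cost $1$, which is within budget and trivially an optimal bundle; each $i\notin P$ gets utility $0$ and cannot do better, since maximality of $P$ forces $C_i\cap S\neq\emptyset$, so $p(D_i)=\tfrac{|C_i\cap S|}{2m}+1-\tfrac1{4m}\ge 1+\tfrac1{4m}>1$; no agent overspends (a satisfied agent holds exactly $D_i$ at cost $1$, an unsatisfied agent holds intervals of total price below $1$); and the whole cake is allocated. So $\langle p,\vec{x}\rangle$ is a CAEI with $SW=|P|\ge K$, completing the equivalence and hence the NP-hardness.

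The step I expect to be the crux is exactly this realization of a packing by a CAEI: an allocation that satisfies a chosen packing need not extend to a competitive one, because the prices must keep every satisfied agent within budget while simultaneously pricing \emph{every} unsatisfied agent's demand strictly above $1$, and a plain uniform pricing of the contested items breaks down when a packed set $C_i$ is large (its items then become too cheap to over-charge an unsatisfied agent that meets $C_i$ in only one element). The private intervals $I_{m+i}$ are the device that fixes this: an unsatisfied agent's private interval can be priced just below $1$, so that an arbitrarily small extra charge --- guaranteed by its non-empty overlap with $S$ --- already pushes it over budget, while each packed agent keeps enough slack on its own private interval to cover its share of the contested items.
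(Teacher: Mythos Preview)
Your proof is correct and follows essentially the same route as the paper: the identical reduction from \textsc{Set Packing} with the private interval $I_{m+i}$ appended to each agent's demand, and the same two-direction argument. The only differences are cosmetic---the paper prices the contested intervals of each packed set to total $1$ and the corresponding private interval to $0$ (with unsatisfied agents' private intervals priced at $1$), whereas you price all packed contested intervals uniformly at $1/(2m)$ and adjust the private intervals---and you are explicit about extending the given packing to a \emph{maximal} one before building the CAEI, a step the paper's argument also needs but leaves implicit.
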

\begin{proof}
We analyze the decision version of the welfare maximization problem and reduce from the NP-complete problem \textsc{Set Packing}: \emph{Given universe $\mathcal{U} = \{1, \ldots, m\}$, collection $\mathcal{C} = \langle C_1, \ldots, C_n \rangle$ of finite subsets of $\mathcal{U}$, and a positive integer $K \leq n$, does $\mathcal{C}$ contain at least $K$ mutually disjoint sets?}

Given such an instance, consider a cake cutting problem with single minded valuations where the set of agents is $N = \{1, \ldots, n\}$ and the cake is divided into $m + n$ intervals, such that interval $I_j = \left[\frac{j}{m+n}, \frac{j+1}{m+n}\right]$ and the demand set of each agent $i$ is 
$D_i = \left(\bigcup_{j \in C_i} I_j \right) \cup I_{m+i}$. The idea is that the intervals $I_1, \ldots, I_m$ correspond to the elements of the universe $\mathcal{U}$ and the demand of agent $i$ corresponds to the elements of $C_i$ plus one more interval ($I_{m+i}$) specifically designed for agent $i$. 

If the cake cutting problem has an optimal allocation from equal incomes with social welfare of at least $K$, then clearly it is possible for $K$ agents to simultaneously receive their demand sets, and so the \textsc{Set Packing} problem has a solution of value at least $K$. On the other hand, if the \textsc{Set Packing} problem has a solution of value at least $K$, let $\mathcal{S} = \langle C_{i_1}, \ldots, C_{i_K} \rangle$ be the respective sets in the solution. Then consider the following allocation and price of the cake. For each agent $i$ for which $C_i \subset \mathcal{S}$, price all the intervals $I_j$ where $j \in C_i$ uniformly and at a total value of $1$; moreover, set the price of interval $I_{m+i}$ at zero. For each agent $i$ for which $C_i \not \subseteq \mathcal{S}$, price the interval $I_{m+i}$ uniformly at total value $1$ and give it to agent $i$. Price all the remaining cake (that has not been allocated by this point) at zero and allocate it arbitrarily. It can be checked that the constructed allocation and prices represent an optimal allocation from equal incomes. Thus the \textsc{Set Packing} problem has a solution of value at least $K$ if and only if the cake cutting problem has an optimal allocation from equal incomes with social welfare at least $K$, which completes the reduction.
\end{proof}

We can also maximize social welfare for discontinuous demands in polynomial time when the number of agents is constant. 
The idea is to try out every possible set $S \subseteq N$ of \emph{happy} agents, who will receive their required land at a price of one,
while the remaining agents split the remaining cake at prices so they cannot afford their demand. Then select the outcome that maximizes welfare 
among all choices of $S$. We note that the procedure of computing a partition induced by the desired intervals was also employed by Cohler et al.~\cite{CLPP11} in the context of computing welfare maximizing envy-free allocations for additive piecewise constant valuations.

\begin{theorem}
Given a cake cutting problem with single minded agents, a welfare maximizing CAEI can be computed in polynomial time when the number of agents is constant.
\end{theorem}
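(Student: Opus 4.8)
The plan is to enumerate over all possible \emph{happy sets} $S \subseteq N$ --- the agents who will receive their full demand and hence contribute $1$ each to the welfare --- and, for each $S$, to decide in polynomial time whether there is a CAEI whose set of satisfied agents is exactly $S$. The answer to the optimization problem is then the largest $|S|$ for which the test succeeds, and we output the corresponding solution. Since the number of agents is constant, there are only $O(2^n)$ candidate sets, so it suffices to make the per-set test run in polynomial time.

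First I would build the common refinement $\mathcal{P} = \{0 = q_0 < q_1 < \cdots < q_m = 1\}$ of the endpoints of \emph{all} demanded intervals $D_{i,j}$, over all agents (as in the earlier cake-cutting proof, but without the midpoint split), obtaining atomic intervals $I_k = [q_{k-1}, q_k]$; their number is linear in $\sum_i m_i$. The structural fact to exploit is that every demand set $D_i$, for every agent (happy or not), is a union of whole atomic intervals. Consequently it loses nothing to restrict the search to price densities that are constant on each atomic interval: averaging any integrable nonnegative density over each $I_k$ leaves $p(D_i)$ unchanged for every $i$, and leaves $p([0,1])$ unchanged as well. A price is then described by a nonnegative vector $(p_k)_k$, with $p(D_i) = \sum_{k : I_k \subseteq D_i} p_k$ and $p([0,1]) = \sum_k p_k$.

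For a fixed $S$, I would first discard it if the demands $\{D_i\}_{i \in S}$ are not pairwise disjoint (up to measure zero), since then the agents of $S$ cannot all be satisfied simultaneously in a partition. Otherwise, solve the linear program that maximizes $\epsilon$ subject to $\sum_{k : I_k \subseteq D_i} p_k \le 1$ for all $i \in S$, $\sum_{k : I_k \subseteq D_i} p_k \ge 1 + \epsilon$ for all $i \in \bar{S}$, $\sum_k p_k \le n$, and $p_k, \epsilon \ge 0$. I claim a CAEI whose happy set is exactly $S$ exists if and only if this LP attains a positive optimum. For the ``if'' direction, given an LP solution with $\epsilon^\star > 0$: assign each $i \in S$ the bundle $D_i$ (allowed, since $p(D_i) \le 1$); the leftover $R = [0,1] \setminus \bigcup_{i \in S} D_i$ has total price $\sum_k p_k - \sum_{i \in S} p(D_i) \le n - \sum_{i \in S} p(D_i)$, which is exactly the total remaining budget slack of the agents ($1 - p(D_i) \ge 0$ for $i \in S$, plus $1$ for each $i \in \bar{S}$), so by cutting the atomic intervals inside $R$ arbitrarily finely one can hand out all of $R$ while keeping every agent's total price at most $1$; then each $i \in S$ gets utility $1$ (optimal, since $1$ is the maximum), and each $i \in \bar{S}$ pays at most $1 < 1 + \epsilon^\star \le p(D_i)$, so cannot afford any superset of $D_i$ and gets utility $0$ (also optimal), yielding a CAEI with welfare exactly $|S|$. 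For the ``only if'' direction, from any CAEI with happy set exactly $S$: for $i \in S$ we have $p(D_i) \le p(\vec{x}_i) \le 1$; for $i \in \bar{S}$ we must have $p(D_i) > 1$, since otherwise $D_i$ would be an affordable utility-$1$ bundle and optimality of $\vec{x}_i$ would force $i$ to be happy; and $\sum_k p_k = p([0,1]) = \sum_i p(\vec{x}_i) \le n$. Averaging this price over the atomic intervals gives an LP-feasible point with $\epsilon = \min_{i \in \bar{S}} (p(D_i) - 1) > 0$.

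The routine parts are the LP feasibility bookkeeping and the running-time count ($O(2^n)$ sets, each with a polynomial-size LP solvable in polynomial time). The one step deserving care is the reduction to atomic-interval-constant prices combined with the constraint $\sum_k p_k \le n$: this is precisely what makes the continuous covering of the remaining cake possible and simultaneously rules out spurious choices of $S$ (for instance $S = \emptyset$ when the demands already cover $[0,1]$, where $p([0,1]) > n$ is forced). I would therefore state the averaging argument and the slack-counting argument cleanly; everything else follows directly. (Note that identical agents need no special handling here: two agents with the same nonempty demand cannot both lie in $S$ since their demands are not disjoint, so that $S$ is discarded automatically.)
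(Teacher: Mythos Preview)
Your proof is correct and follows the same high-level strategy as the paper: enumerate all $2^n$ candidate happy sets $S$, reduce the cake to a finite list of atomic intervals induced by the demand endpoints, and for each $S$ decide via a polynomial-size LP whether a CAEI with exactly that happy set exists.

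The execution differs in a useful way. The paper tests each $S$ by invoking the divisible-goods subroutine of Algorithm~\ref{alg:maxSW} on the atomic intervals (further refined by splitting each interval into as many pieces as agents that want it), so its LP carries both price and money variables and returns the allocation directly. You instead solve a price-only LP with the explicit aggregate constraint $\sum_k p_k \le n$, and then construct the allocation by hand: give each $i\in S$ exactly $D_i$, and distribute the leftover $R$ using the slack-counting argument. Your version is tidier---it avoids the extra subdivision, makes the averaging step for the ``only if'' direction explicit, and shows cleanly why the total-price bound is exactly what is needed to cover the remainder---while the paper's version has the virtue of reusing machinery already built for the divisible-goods case. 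Either route yields the theorem.
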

\begin{proof}
We first show that given a fixed set of agents $S \subseteq N$, it can be checked in polynomial time if there exists a CAEI solution that makes all the agents in $S$ happy, and all the agents in $N \setminus S$ unhappy. Consider the partition $\mathcal{P}$ induced by the endpoints of the consecutive intervals desired by all the agents; let $\mathcal{P} = (I_1, \ldots, I_m)$. Divide each interval $I_j$ into a number of pieces (of the same length) equal to the number of agents that want this piece. This gives a refinement $\mathcal{P}' = (J_1, \ldots, J_r)$ of the partition $\mathcal{P}$. Then we can view each interval $J_{\ell}$ as a divisible good and run the linear program used as a subroutine in Algorithm \ref{alg:maxSW} on these goods with the set of agents $N$, where the demand of $i$ for good $J_r$ is set to $v_{i,r} = 1$ if $J_r \subseteq D_i$ and $v_{i,r} = 0$, otherwise. Then we obtain a feasible solution if and only if the cake cutting problem has a CAEI allocation where the set $S$ is happy. Note that the increase in the number of goods is polynomial in $n$ and the sizes of the original desired intervals, $D_i$.

By iterating over every possible set $S$ of agents, we can select the one that maximizes social welfare in polynomial time when the number of agents is constant.
\end{proof}

Finally, we illustrate the fact that in the cake cutting domain with single minded valuations, the CEEI solution---which requires that each agent completely exhausts its fictitious budget---becomes very different from the CAEI notion; in fact it achieves worse social welfare. Consider the following problem.

\begin{example}
Let there be a cake cutting instance with agents $N = \{1, 2, 3\}$ and valuations such that the desired intervals are $D_1 = [0, 0.5]$, $D_2 = [0.5, 1]$, 
and $D_3 = [0,1]$. Then a CAEI solution is to set the price uniformly, such that $p(x) = 2$ for all $x \in [0,1]$, and the allocation to $\vec{x}_1 = [0,0.5]$, $\vec{x}_2 = [0.5, 1]$, and $\vec{x}_3 = \emptyset$. Then agents $1$ and $2$ get precisely their demand, at a price of $1$, while agent $3$ gets the empty set; however this is feasible since agent $3$ cannot do better---the price of its demand, $D_3$, exceeds the agent's budget: $p(D_3) = 2 > 1$. Moreover, this coincides with the optimal schedule from the interval scheduling problem, where the start and end times are $s_1 = 0$, $f_1 = 0.5$, $s_2 = 0.5$, $f_2 = 1$, and $s_3 = 0$, $f_3 = 1$.

Clearly, this is not a CEEI solution because agent $3$ does not exhaust its budget. Instead, a CEEI solution would be to set $p(x) = 2$, $\forall x \in [0, 0.5]$, $p(x) = 4$, $\forall x \in [0.5, 1]$, and the allocation to $\vec{x}_1 = [0, 0.5]$, $\vec{x}_2 = [0.5, 0.75]$, and $\vec{x}_3 = [0.75, 1]$. This outcome is worse than the optimal schedule, supported at the welfare maximizing CAEI solution computed above.
\end{example}


\section{Multiple Discrete Goods} \label{sec:multiple_discrete}
Our setting for discrete goods is as follows. There is a set $N = \{1, \ldots, n\}$ of agents and $M = \{1, \ldots, n\}$ of items,
where each item $j$ comes in $Q_j$ indivisible copies. Each agent $i$ has a demand set $D_i \subseteq M$, and wants a copy of each
item $j \in D_i$; $D_i$ is not a multi-set. Without loss of generality, we assume that each type of item is required by some agent. We
illustrate the model with an example.

\begin{example}
There are agents $N = \{1, 2\}$, items $M = \{1, 2, 3\}$, quantities $Q_1 = Q_3 = 1$, $Q_2 =
2$, and demands $D_1 = \{1, 2\}$, $D_2 = \{2, 3\}$. 

The goal is to find a set of prices for each item---such that every copy from the same item has an identical price---together with an
allocation $\vec{x}$ such that $(\vec{x}, \vec{p})$ is a CAEI solution. In our example, a CAEI solution is attained at: $p_1 = 0.5, p_2
= 0.5, p_3 = 0.5$ and $\vec{x}_1 =\{1,2\}$, $\vec{x}_2 = \{2,3\}$.
\end{example}

Note that similarly to the cake cutting problem, the CAEI solution is very different from CEEI. In particular, there are instances---such as the one illustrated in the next example---where although there are enough items to go around to all the agents, the number of units is ``wrong'', and so the prices can never be set so that all the budgets are exhausted. This provides additional motivation for studying the CAEI notion of fairness when allocating discrete goods.

\begin{example}
Consider an instance with agents $N = \{1,2,3,4\}$,
itms $M = \{1,2\}$, quantities $Q_1 = Q_2 = 3$, demand sets
$D_1 = \{2\}$, $D_2 = \{1,2\}$, $D_3 = \{1,2\}$, $D_4 = \{1\}$.
Then the prices $p_1 = p_2 = 0.5$ and allocation $\vec{x}_1 = \{2\}$, $\vec{x}_2 = \vec{x}_3 = \{1,2\}$, $\vec{x}_4 = \{1\}$ represents a CAEI solution.

However, there is no CEEI solution, since it's never possible to allocate the entire set of items in a way that exhausts all the budgets.
\end{example}

Next we provide a succinct characterization of the instances that admit a CAEI solution. To this end, a useful notion will be that of
\emph{over-demand}. We say that an item $j$ is over-demanded among a set of agents $S$ if the aggregate demand of the agents in $S$
exceeds the available supply from item $j$, namely $Q_j$. We illustrate this phenomenon in the next example.

\begin{example} 
Consider a market $\mathcal{M}$ with single minded agents and discrete goods, where $M = \{1,2\}$, with quantities $Q_1 = 2$ and $Q_2=
4$, and agents $N = \{1, \ldots, 4\}$, with demands: $D_1 = D_2 = D_3 = \{1\}$, $D_4 = \{1,2\}$.  Then the aggregate demand of the set
of agents $S = \{1,2,3\}$ with singleton demands, consists of 3 copies of item 1, which exceeds $Q_1$, the available supply from this
item. Thus item $1$ is over-demanded among $S$.
\end{example}

The next theorem shows that these are essentially the only instances that don't have a fair outcome. We say that demand of agent $i$
is singleton if $|D_i|=1$. 

\begin{figure}
    \centering
    \includegraphics[width=1.05 \textwidth]{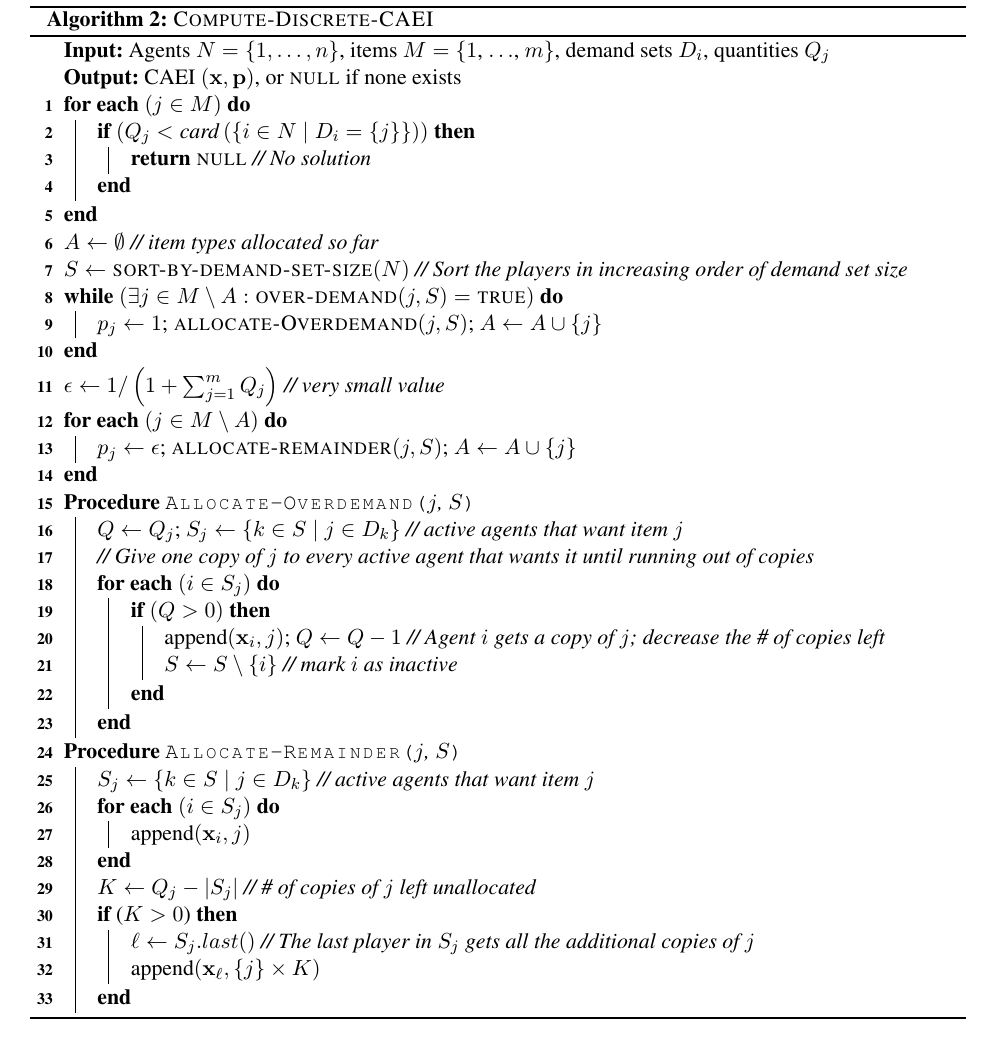}
   \label{alg2}
   \caption{Multiple discrete goods and single minded valuations: Polynomial-time algorithm yielding a succinct characterization of instances that admit a CAEI solution}
\end{figure}

\begin{theorem} \label{thm:discrete_characterization}
Given a fair division problem with single minded agents and discrete goods, a CAEI solution exists if and only if 
there is no set of agents with singleton demands that give rise to an over-demanded item among that set. 
Moreover, a solution can be computed in polynomial time if it exists.
\end{theorem}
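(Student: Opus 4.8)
The plan is to prove both directions separately: the ``only if'' direction is a short structural argument about what any CAEI must look like near an over-demanded singleton item, and the ``if'' direction is carried out by exhibiting a greedy algorithm and verifying it produces a CAEI in polynomial time. Throughout I use that prices are additive over items (copies of the same item cost the same) and that a single-minded agent's utility is $1$ exactly when its bundle contains $D_i$ and $0$ otherwise.

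\emph{Necessity.} Suppose $\langle p, \vec{x}\rangle$ is a CAEI and that there is an item $j$ demanded by a set $T$ of agents each with demand exactly $\{j\}$, with $|T| > Q_j$. First, $p_j \le 1$: otherwise any agent holding a copy of $j$ would spend more than its unit budget, so no copy of $j$ could be allocated, contradicting that all resources are allocated (here I use $Q_j \ge 1$). Since $p_j \le 1$, each $i \in T$ can afford the bundle $\{j\}$, which gives utility $1$, the maximum possible; as $\vec{x}_i$ must maximize $i$'s utility at prices $p$, it follows that $\vec{x}_i$ contains a copy of $j$. The bundles $\vec{x}_i$ are pairwise disjoint, so this needs at least $|T| > Q_j$ copies of $j$ — a contradiction. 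Hence no CAEI exists when such a set is present.

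\emph{Sufficiency (construction and the easy conditions).} Assume no such over-demanding set of singleton agents exists and run the following greedy procedure. Sort the agents in nondecreasing order of $|D_i|$ (singleton agents first), and keep a set of \emph{active} agents, initially $N$. While some not-yet-priced item $j$ is over-demanded among the active agents, price it at $1$, give one copy to each active agent wanting $j$ in sorted order until the $Q_j$ copies are exhausted, and deactivate exactly the recipients. When no over-demanded item remains, set $\epsilon = 1/(1+\sum_j Q_j)$, price every remaining item at $\epsilon$, and give one copy of each such item to every active agent wanting it, dumping all leftover copies of an item (including copies of any item no active agent wants) onto one fixed agent that is still active. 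Such an agent always exists: an over-demanded item leaves at least $(\text{number of its active demanders}) - Q_j \ge 1$ agents who wanted it still active, and if the while loop never executed then all of $N$ is still active. Now the three CAEI conditions hold. All items are allocated: over-demanded items are exhausted in the first phase and every other item is fully distributed in the second. No agent overspends: a deactivated agent holds exactly one item, priced $1$; a still-active agent holds only $\epsilon$-priced items, and holding every copy of every item would cost $\sum_j Q_j \cdot \epsilon < 1$. The loop and the second phase each run at most $m$ times with polynomial work per iteration, so the procedure is polynomial-time.

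\emph{The optimal-bundle condition — the main obstacle.} It remains to show each agent gets a utility-maximizing bundle at the computed prices. For an agent that receives its whole demand this is immediate, since utility $1$ is the maximum. For an agent $i$ with utility $0$ I must show $p(D_i) > 1$, so that no affordable bundle attains utility $1$ and the bundle $\vec{x}_i$ it receives (utility $0$) is optimal. Two facts do the work here, and this is where the sorting and the choice of $\epsilon$ earn their keep. First, every agent whose demand is a singleton $\{j\}$ does receive a copy of $j$: when $j$ is processed all such agents are still active (a singleton agent can only ever receive $j$), and by the assumed condition there are at most $Q_j$ of them, so they all fit among the first $Q_j$ recipients; hence a utility-$0$ agent has $|D_i| \ge 2$. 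Second, a utility-$0$ agent's demand contains an item priced $1$: if $i$ was deactivated it received, hence wanted, an over-demanded item, which lies in $D_i$; if $i$ stayed active throughout, it received a copy of every item of $D_i$ that reached the second phase, so if $D_i$ contained no over-demanded item $i$ would have utility $1$, a contradiction. Combining the two facts, $D_i$ contains an item priced $1$ together with at least one more item, which is priced $\ge \epsilon$, so $p(D_i) \ge 1 + \epsilon > 1$, as needed; hence $\vec{x}_i$ is optimal and $\langle p, \vec{x}\rangle$ is a CAEI.
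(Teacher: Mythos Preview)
Your proof is correct and follows essentially the same approach as the paper: the same necessity argument (price of the over-demanded item can be neither $>1$ nor $\le 1$), the same greedy algorithm (sort by $|D_i|$, repeatedly price an over-demanded item at $1$ and deactivate the $Q_j$ recipients, then price the remainder at $\epsilon = 1/(1+\sum_j Q_j)$), and the same case analysis showing that any utility-$0$ agent has $|D_i|\ge 2$ and some item of $D_i$ priced at $1$. Your write-up is in fact a bit more careful than the paper's in two places: you explicitly argue that at least one active agent survives the while loop (so the leftover copies have somewhere to go, including items no active agent wants), and your necessity direction spells out why $p_j\le 1$ is forced by the ``all resources allocated'' condition.
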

\begin{proof}
Consider Algorithm 2. Clearly, if there is a set of agents $S$ with singleton demands, such that $D_i = \{j\}$ for some $j \in M$ and all $i \in S$, where
$|S| > Q_j$, then no CAEI can exist. This is because $p_j\le 1$ causes over-demand, and $p_j>1$ makes it un-affordable and
thereby un-sold. 

We claim all other instances admit a CAEI. Algorithm 2 forms a sequence $S$ of active agents, initially containing all the agents sorted in increasing order of $|D_i|$, and then iterates over the item types, at each step searching for an item that is desired by more agents among the active ones (i.e. in $S$) than there are copies available. If such an item $j$ is found, then $j$ is allocated at a price of $p_j = 1$, one copy to each of the first $Q_j$ agents in $S$ that want it. All the active agents that got item $j$ are removed from $S$, and the algorithm searches for another such over-demanded \emph{and} unallocated item among the updated set $S$. Otherwise, when no such item is found, all the goods left are in abundant supply for the agents in $S$ and can be given for a small price ($\epsilon$) among them. Note that these last agents, which are allocated bundles at a price of $\epsilon$ per good, receive not only items that they desire, but in fact all the remaining items.

We argue that the pricing rule ensures that a CAEI is computed. In particular, all the goods are allocated---at latest, the player (or set of players), which receive items in the last round of the algorithm, get all the remaining items. Moreover, the pricing rule is such that each agent $i$ is allocated all of its bundle at once, regardless of whether $\vec{x}_i$ contains $i$'s required demand set or not. The agents are divided in two categories, namely those who get exactly one item at a price of $1$ (Line 9) and those who get multiple items at a price of $\epsilon$ (Line 13). In both cases, the pricing rule ensures that the allocation $\vec{x}_i$ costs at most $1$.

Finally, we argue that each agent gets an optimal bundle at the given prices. Let $i$ be an agent 
that has \emph{not} received its demand under the allocation $\vec{x}$ ($D_i \not \subseteq \vec{x}_i$). 
The algorithm gives all the agents with singleton demands their required items, so we have
$|D_i| > 1$. We show that agent $i$ cannot afford the set $D_i$:

\begin{description}
\item[\emph{Case 1}).] Agent $i$ receives one item priced at $1$. Since all the items have positive prices and $|D_i| > 1$, we have
$\vec{p}(D_i) > 1$, so $i$ cannot afford its demand.
\item[\emph{Case 2}).] Agent $i$ gets an empty bundle. Then agent $i$ never got allocated in the $\textsc{Allocate-Remainder}$ subroutine, which corresponds to the phase where the remaining items are plentiful for the currently unallocated agents. Thus agent $i$ wanted at least one item $j$ allocated in one of the calls to the $\textsc{Allocate-Overdemand}$ subroutine, which means that $p_j = 1$. From $|D_i| > 1$, we have $\vec{p}(D_i) > 1$ and $i$ cannot afford its demand set.
\item[\emph{Case 3}).] Agent $i$ receives several items priced at $\epsilon$. If $D_i$ contains any item that
was allocated at a price of $1$, then clearly buyer $i$ cannot afford $D_i$. Otherwise, none of the items in $D_i$ was labeled as
over-demanded; then buyer $i$ both affords and receives at least one copy of each item in its demand set, which would contradict the
choice of agent $i$ as unhappy.  
\end{description}

From the case analysis we conclude that each agent $i$ gets an optimal bundle at the current prices, and so $(\vec{x}, \vec{p})$ is a CAEI solution. 

The algorithm has at most $m$ rounds, each of which calls the $\textsc{allocate}$ procedure depending on whether there is over-demand or under-demand. In the first type of call, the procedure executes a constant number of operations for each agent in the set of active agents that want the current item, which is clearly bounded by $c \cdot |S|$, where $c$ is a constant and $|S| \leq n$. In the second type of call, of under-demand, the procedure allocates all the players in $S$ one item each, and then gives the last player allocated in that call all the remaining items; this takes $O(n)$ operations.
Thus the runtime can be bounded by $poly(m,n,\log{Q_j})$.\\

It follows that in both cases, the algorithm runs in polynomial time, which completes the proof.
\end{proof}

\vspace{4mm}
We illustrate the execution of Algorithm 2 through an example.
\begin{example}
Let $N = \{1$,$2$,$3$,$4$,$5\}$, $M = \{1$,$2$,$3$,$4$,$5\}$, quantities $Q_1 = 2$, $Q_2 = 4$, $Q_3 = 2$, $Q_4 = 3$, $Q_5=2$, demands:
$D_1 = \{1\}$, $D_{2} = \{1$,$2\}$, $D_3 = \{1$,$3\}$, $D_4 = \{2$,$3$,$4\}$, $D_4 = \{2$,$3$,$4$,$5\}$. Only agent $1$ has singleton
demand $\{1\}$ and $|Q_1| \geq 1$, so there is a CAEI. 

Algorithm 2 sorts the agents in increasing order by demand set size and
initializes $S = (1,2,3,4,5)$. The search for over-demanded items begins. Item $1$ is wanted by 3 agents in $S$, namely $\{1,2,3\}$,
but $Q_1 = 2 < 3$; it is over-demanded. Set $p_1 = 1$, $\vec{x}_1 = \vec{x}_2 = \{1\}$. Update $S \leftarrow S \setminus \{1,2\} =
\{3,4,5\}$.  

Next item $3$ is over-demanded (wanted by agents $3,4,5$ but $Q_3 = 2$). Set $p_3 = 1$ and $\vec{x}_3= \vec{x}_4=\{3\}$.
Update $S = \{5\}$. 

All the items are in sufficient quantities now (i.e. there is no over-demand any more). Give all the remaining goods to agent $5$ at price $\epsilon=1/14$ each.
\end{example}


It is NP-hard to compute a welfare-maximizing CAEI (see, e.g., \cite{BHM15}). However, if we no longer insist that \emph{all} the items are allocated,
we can get a polynomial time algorithm by solving the problem as if the goods were divisible, and then rounding the solution.

\begin{theorem} \label{thm:max_SW_discrete}
Consider a fair division problem with agents $N$, $m$ items in quantities $Q= (Q_1, \ldots, Q_m)$, and demand 
$D_i \subseteq \{1,\ldots, m\}$ for agent $i\in N$. Let $\mathcal{T}$ be the types of the agents, where same type agents have the same
demand. 
Then a welfare maximizing CAEI, where all items need not be sold, can be
computed in polynomial time if $|\mathcal T|$ is a constant. 
\end{theorem}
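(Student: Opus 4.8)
The plan is to reduce the problem to a search over the constantly-many subsets of agent types, in the spirit of Algorithm~\ref{alg:maxSW}, after viewing the discrete instance as a divisible one. The first ingredient is a structural fact: in any CAEI, two agents of the same type receive the same utility. Indeed, if an agent of type $\tau$ can afford $D_\tau$ at the given prices then its optimal bundle has value $1$, and if it cannot afford $D_\tau$ its bundle has value $0$; since same-type agents have identical demand sets, they are all ``happy'' or all ``unhappy'' together. Hence every CAEI induces a set $S\subseteq\mathcal T$ of happy types, and its social welfare is $\sum_{\tau\in S} n_\tau$, where $n_\tau$ is the number of agents of type $\tau$.

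Next I would characterize the realizable $S$: there is a CAEI (with items allowed to remain unsold) whose happy set is $S$ if and only if (i) $\sum_{\tau\in S:\,j\in D_\tau} n_\tau\le Q_j$ for every item $j$, and (ii) there is a per-copy price vector $\vec p\ge 0$ with $\sum_{j\in D_\tau}p_j\le 1$ for all $\tau\in S$ and $\sum_{j\in D_\tau}p_j>1$ for all $\tau\notin S$. For the ``if'' direction, give each agent of a happy type exactly its demand set, each agent of an unhappy type the empty bundle, and leave the rest unsold: (i) makes this feasible, happy agents attain the maximum utility $1$ at cost at most $1$, and unhappy agents cannot afford any value-$1$ bundle (any superset of $D_\tau$ costs more than $1$), so $\emptyset$ at cost $0$ is optimal for them; every CAEI requirement except market clearing holds, and market clearing is exactly what the statement relaxes. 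For ``only if'', a CAEI with happy set $S$ already satisfies (i) since each happy agent receives at least one copy of every item it demands, and its prices satisfy (ii). This is the ``divisible-then-round'' view promised in the text: treat item $j$ as one divisible unit with per-agent demand $1/Q_j$, solve the divisible CAEI via Algorithm~\ref{alg:maxSW}, then round each happy agent's bundle to one copy of each demanded item and each unhappy agent's to nothing; the constraint $\sum_i x_{i,j}\le 1$ of the divisible solution is precisely (i), so the rounding never exceeds supply.

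The algorithm then enumerates all $2^{|\mathcal T|}$ subsets $S$ of $\mathcal T$; for each it checks (i) by arithmetic on the $Q_j$, and checks (ii) by solving a linear program in $\vec p$ and a slack variable $\delta\ge 0$ that maximizes $\delta$ subject to $\sum_{j\in D_\tau}p_j\le 1$ for $\tau\in S$ and $\sum_{j\in D_\tau}p_j\ge 1+\delta$ for $\tau\notin S$ (the LP used by \textsc{Solve-Optimal-Subset}), accepting $S$ when the optimum has $\delta>0$. It outputs the allocation and prices associated with the accepted $S$ maximizing $\sum_{\tau\in S}n_\tau$. Correctness is immediate from the characterization: the optimal CAEI's happy set is accepted, so the reported welfare is at least optimal, and every accepted $S$ yields a genuine CAEI of welfare $\sum_{\tau\in S}n_\tau$, so the reported welfare is at most optimal. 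With $|\mathcal T|$ constant, only polynomially many LPs of polynomial size are solved, so the procedure runs in polynomial time.

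The step I expect to be the main obstacle is making condition (ii)'s strict inequality rigorous — arguing that ``some price vector strictly separates the unhappy types'' is faithfully tested by maximizing $\delta$ and checking $\delta>0$, and handling the degenerate cases ($S=\emptyset$, where the LP may be unbounded, or a type with $D_\tau=\emptyset$, which can never be made unhappy and is automatically rejected by the LP). Verifying that the unhappy-agents-get-nothing allocation together with unsold leftovers genuinely meets the CAEI definition, and that the per-copy/divisible price translation preserves affordability in both directions, is routine.
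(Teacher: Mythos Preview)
Your proposal is correct and follows essentially the same approach as the paper: enumerate the $2^{|\mathcal T|}$ subsets of types, for each one test feasibility via supply constraints plus the $\epsilon$-separating LP of \textsc{Solve-Optimal-Subset}, and return the best. The paper phrases this as ``convert to a divisible instance with demands $1/Q_j$, run (a relaxed-clearing variant of) Algorithm~\ref{alg:maxSW}, then round via $\bar x_{i,j}=\lfloor x_{i,j}Q_j\rfloor$'', whereas you unpack that black box into the explicit characterization (i)--(ii) and build the discrete allocation directly (happy agents get exactly $D_\tau$, unhappy agents get $\emptyset$). Your version is more self-contained and makes the optimality argument cleaner via the ``only if'' direction of the characterization; the paper's version is terser and leans on the observation that any discrete allocation is feasible in the divisible relaxation, so the divisible optimum upper-bounds the discrete one. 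The obstacle you flag about strict separation via $\delta>0$ is exactly the same issue present in Algorithm~\ref{alg:maxSW}'s LP and is not new to your route.
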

\begin{proof}
We can convert the problem into an instance with divisible goods, where there is one unit from every good and the percentage required
by player $i$ from item $j$ is $\frac{D_{i,j}}{Q_j}$. Consider the variant of Algorithm 1 where it is not necessary that all the items
are sold---note this variant can be obtained with minor modifications, by changing the respective constraints from ``$=$'' to ``$\leq$''.
By running such an algorithm, we obtain an optimal
allocation from equal incomes with divisible goods without item clearing and which is obtained at some tuple $(\vec{x}, \vec{p})$. Then
define a corresponding allocation $\bar{\vec{x}}$ in the instance with discrete goods, where $\bar{\vec{x}}_{i,j} = \lfloor x_{i,j}
\cdot Q_j\rfloor$. Then the tuple $(\bar{\vec{x}}, \vec{p})$ satisfies the property that the allocation is feasible,
each player gets an optimal bundle at the current prices, and the social welfare is the same as in the corresponding problem with
divisible goods. Moreover, 
no allocation in the problem with discrete goods can obtain higher social welfare than in the corresponding problem with divisible
goods. 
It follows that we obtain a welfare maximizing allocation from equal incomes via Algorithm
1, which runs in time $poly(n, m)$.
\end{proof}

\section{Relation to Other Fairness Solution Concepts}

In this section we show that the CAEI solution is not equivalent to other standard concepts of fairness. Clearly CAEI is not equivalent to proportionality in the realm of single minded valuations, since proportionality exists very rarely on such instances. However, envy-free allocations often do exist in the domain that we studied. We wish to show that the CAEI solution is strictly stronger than envy-freeness; that is, while every CAEI allocation is envy-free, not every envy-free allocation can be supported through CAEI. 

Thus CAEI can be seen as a method for ruling out the least desirable envy-free allocations, i.e. those that cannot be ``explained'' through any equilibrium pricing scheme. It remains an interesting direction for future work to more fully understand the CAEI solution.
 
\begin{theorem} \label{thm:non_equivalence}
The CAEI solution is strictly stronger than envy-free freeness for single minded agents in all the models considered: multiple divisible goods, cake cutting, and multiple discrete goods.
\end{theorem}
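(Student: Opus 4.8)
The plan is to prove the two halves of ``strictly stronger'' separately. The inclusion --- every CAEI allocation is envy-free --- was already observed in the text (an envious agent could simply buy the coveted bundle, since endowments are equal), so all the work is in the strictness: in each of the three models I will exhibit one fair-division instance together with an envy-free allocation $\vec{x}$ that extends to \emph{no} CAEI outcome $\langle p, \vec{x}\rangle$. The first step is to restate ``$\vec{x}$ extends to a CAEI'' as a condition purely on prices. By the ``no overspending'' clause we need $p(\vec{x}_i)\le 1$ for every $i$; by the ``optimal bundle'' clause, any agent $i$ with $V_i(\vec{x}_i)=0$ must be unable to afford a bundle containing $D_i$, and in all three models the cheapest such bundle is $D_i$ itself, so this forces $p(D_i)>1$. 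Conversely, if the resulting system of price inequalities is infeasible, then $\vec{x}$ cannot be completed to a CAEI. (The CAEI definition imposes nothing further --- in particular no market-clearing/``spend it all'' requirement --- so for our purposes this is an exact characterization.)

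Next I would give a single construction that instantiates in all three settings. Take two agents and four ``goods'' $a,b,c,d$, with $D_1=\{a,b\}$ and $D_2=\{c,d\}$, and the allocation $\vec{x}_1=\{a,c\}$, $\vec{x}_2=\{b,d\}$. For discrete goods these are four items in one copy each; for multiple divisible goods they are four unit goods where agent $i$ demands $100\%$ of its two goods and receives $100\%$ of its two goods under $\vec{x}$; for cake cutting they are four consecutive intervals $I_1,\dots,I_4$ partitioning $[0,1]$, with $D_1=I_1\cup I_2$, $D_2=I_3\cup I_4$, $\vec{x}_1=I_1\cup I_3$, $\vec{x}_2=I_2\cup I_4$. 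In every case $\vec{x}$ allocates the whole resource. Envy-freeness is immediate: neither $\vec{x}_1$ nor $\vec{x}_2$ contains $D_1$ or $D_2$, so \emph{both} agents receive utility $0$ from their own bundle and also from the other agent's bundle, hence no one envies anyone. Moreover each of these instances does admit \emph{some} CAEI --- by Theorem~\ref{thm:existence_divisible}, by the cake-cutting existence theorem, and by Theorem~\ref{thm:discrete_characterization} (there are no singleton demands) respectively --- so the envy-free allocation we single out is a genuine extra envy-free outcome that equilibrium pricing rules out.

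Finally I would run the infeasibility argument. Writing $p_a,p_b,p_c,p_d$ for the prices of the four goods (in the cake case, $p(I_1),\dots,p(I_4)$), the fact that both agents are unhappy under $\vec{x}$ forces $p(D_1)>1$ and $p(D_2)>1$, i.e.\ $p_a+p_b>1$ and $p_c+p_d>1$, while ``no overspending'' gives $p(\vec{x}_1)\le 1$ and $p(\vec{x}_2)\le 1$, i.e.\ $p_a+p_c\le 1$ and $p_b+p_d\le 1$. Adding the two strict inequalities yields $p_a+p_b+p_c+p_d>2$, whereas adding the two weak ones yields $p_a+p_b+p_c+p_d\le 2$, a contradiction. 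Hence no price vector completes $\vec{x}$ to a CAEI, so $\vec{x}$ is envy-free but not CAEI in each of the three models, which is exactly the claimed strict separation.

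I expect the only delicate point to be the first step: arguing carefully from the ``maximizes utility given the prices and unit budget'' clause that an unhappy agent forces $p(D_i)>1$, and confirming that the CAEI definition adds no side condition that could rescue $\vec{x}$ --- once that characterization is in hand, the construction and the one-line counting argument are routine. A secondary point is simply to be exhaustive in the envy-freeness check over all ordered pairs of agents, which here is trivial because every agent has utility $0$ on every part of the allocation.
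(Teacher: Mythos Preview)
Your proposal is correct and follows essentially the same approach as the paper: exhibit an envy-free allocation where both agents get utility zero, then derive a contradiction by summing the ``cannot afford demand'' inequalities against the ``no overspending'' inequalities. The only cosmetic difference is that you use a single unified four-good construction across all three models, whereas the paper tailors a separate (but structurally identical) example to each setting --- in particular, for divisible goods the paper uses two goods with fractional demands rather than four goods with unit demands --- and you additionally note that each instance admits \emph{some} CAEI, which the paper omits.
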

\begin{proof}
The proof is broken down in three components, corresponding to the scenarios considered. For multiple divisible goods, let there be an instance with $N = \{1, 2\}$, $M = \{1, 2\}$, and demands $D_1 = \langle 0.2, 0.2 \rangle$, $D_2 = \langle 0.8, 0.8 \rangle$. Consider the allocation $\vec{x}_1 = \langle 1,0 \rangle$, $\vec{x}_2 = \langle 0, 1 \rangle$. Clearly, this allocation is envy-free since each agent is missing one good in either of the bundles $\vec{x}_1$, $\vec{x}_2$. Assume there exist supporting CAEI prices $\vec{p}$. By feasability constraints, we have: $p_1 \leq 1$, $p_2 \leq 1$ (I). From the optimality condition of CAEI, we must have that no agent can afford a better bundle; in particular, agent $1$ cannot afford their own demand set: $0.2 \cdot p_1 + 0.2 \cdot p_2 > 1$ (II). 
From (I) and (II), we have that $5 < p_1 + p_2 \leq 2$, contradiction. 

In the cake cutting model, let $N = \{1, 2\}$, $D_1 = [0, 0.4]$, $D_2 = [0.4, 1]$, and consider the envy-free allocation $\vec{x}_1 = [0, 0.2] \cup [0.4, 0.7]$, $\vec{x}_2 = [0.2, 0.4] \cup [0.7, 1]$. Suppose there exists a CAEI price curve $\vec{p}$ at $\vec{x}$; denote by $p_1, p_2, p_3, p_4$ the prices of the intervals $[0, 0.2]$, $[0.2, 0.4]$, $[0.4, 0.7]$, $[0.7, 1]$, respectively. By feasability constraints, we get: $p_1 + p_3 \leq 1$ and $p_2 + p_4 \leq 1$ (III). Both agents have a utility of zero at $\vec{x}$, so none should afford their demand set, i.e. $p_1 + p_2 > 1$ and $p_3 + p_4 > 1$ (IV). From (III) and (IV) we get that $2 < p_1 + p_2 + p_3 + p_4 \leq 2$, contradiction. Thus $\vec{p}$ cannot exist.

Finally, for multiple discrete goods, let $N = \{1, 2\}$, $M = \{1, 2, 3, 4\}$ with $Q_j = 1$ $\forall j$, demands $D_1 = \{1,2\}$, $D_2 = \{3,4\}$. Consider allocation $\vec{x}_1 = \{1, 3\}$, $\vec{x}_2 = \{2,4\}$. Both agents have utilities zero at either bundle $\vec{x}_1, \vec{x}_2$. The CAEI constraints give: $p_1 + p_3 \leq 1$, $p_2 + p_4 \leq 1$ (V) and $p_1 + p_2 > 1$, $p_3 + p_4 > 1$ (VI). It follows that $2 < p_1 + p_2 + p_3 + p_4 \leq 2$, contradiction.

In all the scenarios considered, the envy-free allocation $\vec{x}$ is not a CAEI solution.
\end{proof}

\section{Discussion}
We studied the computation and complexity of the {\em competitive allocation from equal incomes}, a method for fair
division, for single minded agents for multiple divisible and discrete goods, and cake-cutting. Although a solution can be computed
efficiently, social welfare maximizing solutions are hard to compute in general. However, we solved the latter efficiently for
several
interesting
special scenarios. Problems with constantly many goods, or altogether characterizing easy instances remain unresolved. 
It will be interesting to settle these. We note that our results also work valuation classes where the agents have a desired target set (e.g. the set of parts of a bike),
and very small positive value ($\epsilon$) for any other bundle of goods. For such scenarios, the solution computed is an $\epsilon$-CAEI.

An immediate generalization is that of Leontief valuations, where even existence of succinct characterization of 
instances that admit CEEI or CAEI in case of discrete goods is not clear, while the bigger question of 
understanding fair division with complementarities remains a mystery at large.

\section{Acknowledgements}

We are grateful to Erel Segal-Halevi for useful feedback.

\nocite{*}

\bibliographystyle{plain}

\begin{thebibliography}{10}

\bibitem{ACMS61}
K.~J. Arrow, H.~Chenery, B.~Minhas, and R.~Solow.
\newblock Capital-labor substitution and economic efficiency.
\newblock {\em The Review of Economics and Statistics}, (43(3)):225--250, 1961.

\bibitem{AD54}
K.~J. Arrow and G.~Debreu.
\newblock Existence of an equilibrium for a competitive economy.
\newblock {\em Econometrica}, (22):265--290, 1954.

\bibitem{AM16}
H.~Aziz and S.~Mackenzie.
\newblock A discrete and bounded envy-free cake cutting protocol for four
  agents.
\newblock In {\em Proceedings of STOC}, 2016.

\bibitem{BL14}
S.~Bouveret and M.~Lemaitre.
\newblock Characterizing conflicts in fair division of indivisible goods using
  a scale of criteria.
\newblock In {\em AAMAS}, pages 1321--1328, 2014.

\bibitem{Brainard00}
W.~Brainard and H.~E. Scarf.
\newblock How to compute equilibrium prices in 1891.
\newblock In {\em Cowles Foundation Discussion Paper}, 2000.

\bibitem{BJK08}
S.~Brams, M.~Jones, and C.~Klamler.
\newblock {\em International Journal of Game Theory}, 36(3):353--367, 2008.

\bibitem{BT96}
S.~J. Brams and A.~D. Taylor.
\newblock {\em Fair Division: From Cake-Cutting to Dispute Resolution}.
\newblock Cambridge University Press, 1996.

\bibitem{BHM15}
S.~Branzei, H.~Hosseini, and P.~B. Miltersen.
\newblock Characterization and computation of equilibria for indivisible goods.
\newblock In {\em SAGT}, pages 244--255, 2015.

\bibitem{BPZ13}
S.~Branzei, A.~Procaccia, and J.~Zhang.
\newblock Externalities in cake cutting.
\newblock In {\em IJCAI}, 2013.

\bibitem{BKV05}
P.~Briest, P.~Krysta, and B.~V\:{o}cking.
\newblock Approximation techniques for utilitarian mechanism design.
\newblock In {\em STOC}, 2005.

\bibitem{Budish11}
E.~Budish.
\newblock The combinatorial assignment problem: Approximate competitive
  equilibrium from equal incomes.
\newblock {\em Journal of Political Economy}, 6(119):1061--1103, 2011.

\bibitem{CLP11}
I.~Caragiannis, John~K. Lai, and A.~D. Procaccia.
\newblock Towards more expressive cake cutting.
\newblock In {\em IJCAI}, pages 127--132, 2011.

\bibitem{CT09}
X.~Chen and S.~Teng.
\newblock Spending is not easier than trading: On the computational equivalence
  of fisher and arrow-debreu equilibria.
\newblock In {\em ISAAC}, pages 647--656, 2009.

\bibitem{CDEL+06}
Y.~Chevaleyre, P.~E. Dunne, U.~Endriss, J.~Lang, M.~Lema\^itre, N.~Maudet,
  J.~Padget, S.~Phelps, J.~A. Rodr\'iguez-Aguilar, and P.~Sousa.
\newblock Issues in multiagent resource allocation.
\newblock {\em Informatica}, 30:3--31, 2006.

\bibitem{CV04}
B.~Codenotti and K.~R. Varadarajan.
\newblock {Efficient Computation of Equilibrium Prices for Markets with
  Leontief Utilities}.
\newblock In {\em ICALP}, pages 371--382, 2004.

\bibitem{CLPP11}
Y.~J. Cohler, J.~K. Lai, D.~C. Parkes, and A.~D. Procaccia.
\newblock Optimal envy-free cake cutting.
\newblock pages 626--631, 2011.

\bibitem{DM09}
M.~Dallâ€™Aglio and F.~Maccheroni.
\newblock Disputed lands.
\newblock {\em Games and Economic Behavior}, 66(1):57--77.

\bibitem{EG59}
E.~Eisenberg and D.~Gale.
\newblock Consensus of subjective probabilities: The pari-mutuel method.
\newblock {\em Annals Of Mathematical Statistics}, (30):165--168, 1959.

\bibitem{FFII+15}
U.~Feige, M.~Feldman, N.~Immorlica, R.~Izsak, B.~Lucier, and V.~Syrgkanis.
\newblock A unifying hierarchy of valuations with complements and substitutes.
\newblock In {\em AAAI}, pages 872--878, 2015.

\bibitem{FL14}
Michal Feldman and Brendan Lucier.
\newblock Clearing markets via bundles.
\newblock In {\em SAGT}, pages 158--169, 2014.

\bibitem{Foley67}
D.~Foley.
\newblock Resource allocation and the public sector.
\newblock {\em Yale Econ. Essays}, 7:45--98, 1967.

\bibitem{GZHB+11}
A.~Ghodsi, M.~Zaharia, B.~Hindman, A.~Konwinski, S.~Shenker, and I.~Stoica.
\newblock Dominant resource fairness: Fair allocation of multiple resource
  types.
\newblock In {\em NSDI}, pages 323--336, 2011.

\bibitem{GP14}
J.~Goldman and A.~D. Procaccia.
\newblock Spliddit: Unleashing fair division algorithms.
\newblock {\em SIGecom Exchanges}, 13(2):41--46, 2014.

\bibitem{HS13}
F.~Husseinov and N.~Sagara.
\newblock Existence of efficient envy-free allocations of a heterogeneous
  divisible commodity with nonadditive utilities.
\newblock {\em Social Choice and Welfare}, 41(4), pages = {923-940}).

\bibitem{KT05}
J.~Kleinberg and E.~Tardos.
\newblock {\em Algorithm Design}.
\newblock Pearson, 2005.

\bibitem{KLP13}
D.~Kurokawa, J.~K. Lai, and A~D. Procaccia.
\newblock How to cut a cake before the party ends.
\newblock In {\em AAAI}, 2013.

\bibitem{L07}
J.~O. Ledyard.
\newblock Optimal combinatoric auctions with single-minded bidders.
\newblock In {\em EC}, pages 237--242, 2007.

\bibitem{Maxfield97}
R.~Maxfield.
\newblock General equilibrium and the theory of directed graphs.
\newblock {\em Journal of Mathematical Economics}, 27(1):23--51, 1997.

\bibitem{MS13}
R.~Mehta and M.~Sohoni.
\newblock Exchange markets: Strategy meets supply-awareness - (abstract).
\newblock In {\em WINE}, pages 361--362, 2013.

\bibitem{Moulin04}
H.~Moulin.
\newblock {\em {Fair Division and Collective Welfare}}.
\newblock MIT Press, 2004.

\bibitem{NRTV07}
N.~Nisan, T.~Roughgarden, E.~Tardos, and V.~Vazirani (Eds.).
\newblock {\em Algorithmic Game Theory}.
\newblock Cambridge University Press, 2007.

\bibitem{OPR14}
A.~Othman, C.~H. Papadimitriou, and A.~Rubinstein.
\newblock The complexity of fairness through equilibrium.
\newblock In {\em ACM-EC}, pages 209--226, 2014.

\bibitem{Pro13}
A.~D. Procaccia.
\newblock Cake cutting: Not just child's play.
\newblock {\em CACM}, 56(7):78--87, 2013.

\bibitem{Pro13_handbook}
A.~D. Procaccia.
\newblock Cake cutting algorithms, 2014.
\newblock Draft of Chapter 13 of the Handbook of Computational Social Choice.

\bibitem{RW98}
J.~M. Robertson and W.~A. Webb.
\newblock {\em Cake Cutting Algorithms: Be Fair If You Can}.
\newblock A. K. Peters, 1998.

\bibitem{RPIJ13}
V.~Robu, D.~Parkes, T.~Ito, and N.~R. Jennings.
\newblock Efficient interdependent value combinatorial auctions with single
  minded bidders.
\newblock In {\em IJCAI}, 2013.

\bibitem{HHA15}
E.~Segal{-}Halevi, A.~Hassidim, and Y.~Aumann.
\newblock Envy-free cake-cutting in two dimensions.
\newblock In {\em AAAI}, pages 1021--1028, 2015.

\bibitem{HNHA15}
E.~Segal{-}Halevi, S.~Nitzan, A.~Hassidim, and Y.~Aumann.
\newblock Fair and square: Cake-cutting in two dimensions.
\newblock {\em CoRR}, abs/1510.03170, 2015.

\bibitem{SHA15}
Erel Segal{-}Halevi, Avinatan Hassidim, and Yonatan Aumann.
\newblock Waste makes haste: Bounded time protocols for envy-free cake cutting
  with free disposal.
\newblock In {\em Proceedings of AAMAS}, pages 901--908, 2015.

\bibitem{Solow56}
R.M. Solow.
\newblock A contribution to the theory of economic growth.
\newblock {\em The Quarterly Journal of Economics}, (70):65--94, 1956.

\bibitem{Steinhaus51}
H.~Steinhaus.
\newblock {\em {Mathematical Snapshots}}.
\newblock Oxford, 1951.
\newblock second edition.

\bibitem{Varian74}
H.~Varian.
\newblock Equity, envy and efficiency.
\newblock {\em Journal of Econ. Theory}, 9:63--91, 1974.

\end{thebibliography}


\appendix

\section{Missing Proofs}

\begin{theorem} \label{thm:divisible_given_xp}
Given a fair division instance with single minded valuations and divisible goods, and allocation $\vec{x}$, a CAEI solution at $\vec{x}$ can be found in polynomial time (if one exists). Similarly, given price vector $\vec{p}$ a matching allocation can be found in polynomial time.
\end{theorem}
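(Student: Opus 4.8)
I would treat the two halves of the statement separately, reducing each to a linear program; the structural point is that for single minded divisible valuations a CAEI supported at a prescribed $\vec x$ (resp.\ $\vec p$) is determined by (in)equalities that are linear in the remaining free object $\vec p$ (resp.\ $\vec x$).

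\emph{From an allocation to supporting prices.} First verify that $\vec x$ allocates each good fully, $\sum_{i} x_{i,j}=1$ for every $j$; otherwise $\bigcup_i\vec x_i\neq\mathcal R$ and no CAEI is supported at $\vec x$. Let $H=\{i\in N:\ x_{i,j}\ge v_{i,j}\ \forall j\}$ be the agents happy under $\vec x$. Then $\vec p\ge\vec 0$ supports $\vec x$ iff: (i) $\sum_j p_j x_{i,j}\le 1$ for all $i$ (no overspending, and optimality of $\vec x_i$ for the happy agents, who already attain the maximal utility $1$); and (ii) $\sum_j p_j v_{i,j}>1$ for all $i\notin H$ (an unhappy agent cannot afford its demand, so its $0$-utility bundle is optimal). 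As (ii) is strict, I would, just as in the LP subroutine of Algorithm~\ref{alg:maxSW}, solve $\max\,\epsilon$ over $\sum_j p_j x_{i,j}\le 1\ (\forall i)$, $\sum_j p_j v_{i,j}\ge 1+\epsilon\ (\forall i\notin H)$, $\vec p\ge\vec 0$, $0\le\epsilon\le 1$: if this is infeasible, no CAEI at $\vec x$ exists; otherwise its optimum $\epsilon^*$ is attained, a CAEI at $\vec x$ exists iff $\epsilon^*>0$, and then the optimal $\vec p^*$ supports it. Indeed, if $\epsilon^*>0$ then $\vec p^*$ meets (i) and $\sum_j p^*_j v_{i,j}\ge 1+\epsilon^*>1$, while conversely any $\vec p$ meeting (i)--(ii) is a feasible point with objective $\min\{1,\min_{i\notin H}(\sum_j p_j v_{i,j}-1)\}>0$; and when $H=N$ the LP is feasible with $\epsilon^*=1$, matching the CAEI $(\vec x,\vec 0)$.

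\emph{From prices to a matching allocation.} Let $A=\{i\in N:\ \sum_j p_j v_{i,j}\le 1\}$. In any CAEI at $\vec p$ every $i\in A$ receives a utility-maximizing bundle, whose value is necessarily $1$, so $x_{i,j}\ge v_{i,j}$ for all $j$; hence $\sum_{i\in A} v_{i,j}\le 1$ for all $j$ is necessary. Assuming this, set $x_{i,j}=v_{i,j}$ for $i\in A$ and distribute the residual $r_j:=1-\sum_{i\in A} v_{i,j}\ge 0$ of each good among all agents without breaking budgets, i.e.\ solve the transportation feasibility LP (a max-flow check once denominators are cleared) for $y_{i,j}\ge 0$ with $\sum_{i} y_{i,j}=r_j$ for all $j$ and $\sum_j p_j y_{i,j}\le s_i$ for all $i$, where $s_i=1-\sum_j p_j v_{i,j}$ for $i\in A$ and $s_i=1$ otherwise. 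Then $\vec x_i:=v_{i,\cdot}+y_{i,\cdot}$ ($i\in A$) and $\vec x_i:=y_{i,\cdot}$ ($i\notin A$) is a CAEI: agents in $A$ are happy and pay at most $1$; agents outside $A$ pay at most $1$ and, since $\sum_j p_j v_{i,j}>1$ with $\vec p\ge\vec 0$, cannot afford any bundle of utility $1$, so $\vec x_i$ is optimal; and every good is sold. Conversely, from any CAEI $(\vec x',\vec p)$ the vectors $y'_{i,\cdot}=\vec x'_i-v_{i,\cdot}$ ($i\in A$), $y'_{i,\cdot}=\vec x'_i$ ($i\notin A$) are LP-feasible, so infeasibility of the LP (or $\sum_{i\in A} v_{i,j}>1$ for some $j$) certifies that no CAEI at $\vec p$ exists.

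\emph{Main obstacle and running time.} Each step is an $\mathrm{poly}(n,m)$ check or an LP of size polynomial in $n$, $m$ and the bit-lengths of $\vec x$, $\vec p$, $\mathcal D$, so the total running time is polynomial; note that the existence guarantee of Theorem~\ref{thm:existence_divisible} is of no use here, since we must also certify non-existence for a badly chosen input. The part that needs care is not the computation but the correctness of the two reductions: handling the strict ``cannot afford'' conditions through $\epsilon$-maximization and proving the dichotomy ``supporting prices exist $\Leftrightarrow$ LP optimum $>0$'', and, in the second direction, the observation that every CAEI must make all of $A$ happy --- which is exactly what turns a hypothetical CAEI into a feasible residual routing and hence gives the ``only if'' direction.
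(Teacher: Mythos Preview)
Your approach is correct and essentially matches the paper's: both directions are reduced to polynomial-size linear programs, with the $\vec{x}\to\vec{p}$ half handled by exactly the same $\epsilon$-maximization trick. The only cosmetic difference is that for the $\vec{p}\to\vec{x}$ half the paper writes a single feasibility system in the variables $x_{i,j}$ (all goods sold, all budgets respected, $x_{i,j}\ge v_{i,j}$ for the affording agents) rather than your two-stage ``give demands, then route the residual'' decomposition; the two formulations are equivalent, and your version has the small advantage of making the ``only if'' direction (infeasibility certifies non-existence) explicit. Your LP for the $\vec{x}\to\vec{p}$ direction is in fact stated a bit more carefully than the paper's --- you correctly impose the budget constraint $\sum_j p_j x_{i,j}\le 1$ on \emph{all} agents and write the unaffordability constraint against $v_{i,j}$ rather than $x_{i,j}$, and you bound $\epsilon\le 1$ so the LP is guaranteed bounded.
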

\begin{proof} 
Given $\vec{p}$, let $S$ be the set of agents that can afford their demand at these prices; that is, for each $i \in S$, we have that  $\sum_{j=1}^{m} v_{i,j} * p_j \leq 1$. Then an equilibrium allocation can be found by solving the following constraints with variable $\vec{x}$:
\begin{small}
\begin{alignat*}{2}
		      & \sum_{j=1}^{m} x_{i,j} \cdot p_j \leq 1, & & \forall i \in N\\
                      & \sum_{i=1}^{n} x_{i,j} = 1, & & \forall j \in M\\
		      & x_{i,j} \geq v_{i,j}, & & \forall i \in S\\
		      & x_{i,j} \geq 0, & & \forall i \in N, \forall j \in M\\
\end{alignat*}
\end{small}
That is, we require that each agent can afford its own bundle, the agents in $S$ get their demand set, all the goods are allocated and the allocation is feasible.

For the second part of the theorem, suppose we are given an allocation $\vec{x}$. Let $S \subseteq N$ be the set of agents that receive their demand under $\vec{x}$, i.e. for which $x_{i,j} \geq v_{i,j}$ $\forall j \in M$. If $\vec{x}$ satisfies basic feasability constraints (i.e. all the items are sold and $\vec{x} \geq 0$), Then a price vector that matches the allocation (if it exists) can be found by solving the following linear program:

\begin{small}
\begin{alignat*}{2}
    \text{max  }   & \; \; \epsilon\  \\
    \text{subject to} & \; \; \sum_{j=1}^{m} p_j \cdot x_{i,j} \leq 1, &  & \; \forall i \in S\\
                      &  \; \; \sum_{j=1}^{m} p_j \cdot x_{i,j} \geq 1 + \epsilon, & &  \; \forall i \in \bar{S}\\
                      & \; \; p_j \geq 0, & &  \;  \forall j \in M\\
		      & \; \; \epsilon \geq 0, & & 
\end{alignat*}
\end{small}
The constraints require that the agents in $S$ can afford their bundles, while the agents in $\bar{S}$ cannot. The value of $\epsilon$ obtained by solving the LP is strictly positive if and only if there is an equilibrium at $\vec{x}$ (and zero otherwise).
\end{proof}

\begin{theorem} \label{thm:cake_given_xp}
Given a cake cutting problem with single minded valuations and allocation $\vec{x}$, a supporting price curve can be computed in polynomial time (if one exists). Similarly, given a price curve $\vec{p}$, a CAEI allocation at $\vec{p}$ can be computed in polynomial time.
\end{theorem}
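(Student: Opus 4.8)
The plan is to reduce both directions to the corresponding results for multiple divisible goods, Theorem~\ref{thm:divisible_given_xp}, by discretizing the cake into atomic intervals exactly as in the existence proof for cake cutting. Let $\mathcal{P}'$ be the common refinement of (i) the endpoints of all demanded intervals $D_{i,j}$, together with, depending on which direction we are in, (ii) the endpoints of all intervals appearing in the given allocation $\vec{x}$, or the breakpoints of the given price curve $\vec{p}$ (assumed to be given by a finite description, e.g. piecewise constant, which is the natural computational model and the form in which the curves produced elsewhere in the paper arise). This yields atomic intervals $J_1, \ldots, J_r$ with $r$ polynomial in the input size, and each $D_i$, each $\vec{x}_i$, and the whole cake is a union of atoms. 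We regard each $J_\ell$ as a divisible good of unit supply, with agent $i$'s demand coefficient for good $J_\ell$ equal to $1$ if $J_\ell \subseteq D_i$ and $0$ otherwise.

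For the first direction, since every boundary of every bundle $\vec{x}_i$ is a point of $\mathcal{P}'$, each atom lies entirely in exactly one bundle, so $\vec{x}$ induces a $0/1$ allocation $\vec{x}'$ of the atomic goods. Run the allocation-to-price LP of Theorem~\ref{thm:divisible_given_xp} on $\vec{x}'$; if its optimum $\epsilon$ is positive, spread the resulting per-good prices uniformly over the atoms to obtain a price density that supports $\vec{x}$. The equivalence relies on the observation that a supporting price curve for $\vec{x}$, if any exists, may be taken constant on each atom: only the integrals $p(D_i)$ and $p(\vec{x}_i)$ enter the CAEI conditions, and these are preserved when $p$ is replaced by its average value on each atom. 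Hence a supporting curve exists iff the finite LP admits a strictly positive $\epsilon$ (the finitely many strict inequalities of a genuine supporting curve give a uniform slack), and in that case we have produced one.

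For the second direction, $p$ is constant on each atom, so reading off $p_\ell := p(J_\ell)$ gives a divisible instance with price vector $\vec{p}' \ge 0$; apply the price-to-allocation construction of Theorem~\ref{thm:divisible_given_xp} to obtain, in polynomial time, a (possibly fractional) allocation of the atomic goods, or a certificate of infeasibility. Translating the atom fractions into subintervals of $[0,1]$ yields a cake allocation at $\vec{p}$. This is a CAEI: agents who are given their demand have utility $1$, hence an optimal bundle; the remaining agents cannot afford $D_i$, so any affordable bundle is optimal for them, and the feasibility program guarantees each bundle costs at most $1$ and that all of the cake is allocated. Conversely, if a cake CAEI at $\vec{p}$ exists then the program is feasible, since such an allocation must give every agent who can afford $D_i$ exactly that set, which yields a feasible point.

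The arguments are routine once the reduction is in place; I expect the only delicate point to be expository, namely stating the discretization precisely enough that feasibility of the finite program is genuinely equivalent to the existence of a continuous price curve or cake allocation — in particular the averaging argument in the first direction and the finite-description assumption on the input price curve in the second. There is no substantive mathematical obstacle beyond what is already handled in Theorem~\ref{thm:divisible_given_xp}.
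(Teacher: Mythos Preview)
Your proposal is correct and follows essentially the same route as the paper: discretize the cake into atomic intervals using the endpoints of the demand sets together with the endpoints of $\vec{x}$ (first direction) or the breakpoints of $\vec{p}$ (second direction), interpret the atoms as divisible goods, and invoke Theorem~\ref{thm:divisible_given_xp}. Your write-up is in fact somewhat more careful than the paper's: you make explicit the averaging argument showing that a supporting curve may be assumed constant on each atom, and in the second direction you include the price breakpoints in the refinement and state the finite-description assumption on $\vec{p}$, both of which the paper leaves implicit.
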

\begin{proof}
Given a price curve $\vec{p}$, let $S$ be the set of agents that can afford their demand at $\vec{p}$. Consider the finest partition $\mathcal{P} = (I_1, \ldots, I_m)$ induced by the demand sets of the agents \emph{and} the allocation $\vec{x}$, such that every endpoint that appears in a demand set or in $\vec{x}$ is included as an end-point for some interval $I_j$. Then divide each interval $I_j$ into a number of pieces (of equal length) equal to the number of agents that want this piece. Let $\mathcal{P}'  = (J_1, \ldots, J_{r})$ be the resulting, finer partition, with these additional points resulting from cutting the intervals $I_j$. Now the problem can be interpreted as an instance with multiple divisible goods---one for every interval $J_{\ell}$---and valuations $v_{i,\ell}$ defined such that for any agent $i$ and good $J_{\ell}$, we have $v_{i,\ell} = 1$ if $J_{\ell} \in D_i$ and $v_{i, \ell} = 0$, otherwise. By Theorem \ref{thm:divisible_given_xp}, a supporting set of prices can be computed in polynomial time (if it exists), and these can be converted into a price curve in the cake cutting problem.

Given a price curve $\vec{p}$, we can compute again a partition induced by the demands of the agents, and solve the problem of finding the allocation by calling the algorithm for divisible goods in Theorem \ref{thm:divisible_given_xp} and casting back the solution to an allocation in the cake cutting problem.
\end{proof}

\begin{theorem}
Given an instance with single minded utilities and discrete goods, and an allocation $\vec{x}$, it can be decided in polynomial time if 
there exist equilibrium prices to support the allocation $\vec{x}$. However, it is co-NP-hard to determine if there exists an allocation $\vec{x}$ such that $(\vec{x}, \vec{p})$ is an equilibrium.
\end{theorem}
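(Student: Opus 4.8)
My plan is to treat the two halves separately: the algorithmic half is an easy linear‑program feasibility test, essentially the discrete analogue of the second LP in Theorem~\ref{thm:divisible_given_xp} and of the subroutine in Algorithm~\ref{alg:maxSW}, while the hardness half is a reduction from \textsc{Partition}. Given an allocation $\vec{x}$, I would first check that it assigns all $Q_j$ copies of every good $j$ (necessary for any CAEI) and then read off the set $S=\{i: D_i\subseteq\{j: x_{i,j}\ge 1\}\}$ of agents who already receive their demand; this set depends only on $\vec{x}$. A price vector $\vec{p}\geq 0$ supports $\vec{x}$ precisely when (i) $\sum_j x_{i,j}p_j\le 1$ for every $i\in N$ (no overspending — note this already forces $p_j\le 1$, since each good is held by some agent), and (ii) $\sum_{j\in D_i}p_j>1$ for every $i\notin S$ (an unhappy agent must be unable to afford its demand set), while every agent in $S$ is automatically at an optimal bundle. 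These are linear constraints in $\vec{p}$, so I would introduce a slack $\epsilon\ge 0$, replace (ii) by $\sum_{j\in D_i}p_j\ge 1+\epsilon$, maximize $\epsilon$, and declare supporting prices to exist iff the optimum is strictly positive. The LP has size polynomial in the input, giving the claimed polynomial‑time test.

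For the hardness part, I would reduce from \textsc{Partition}: given positive integers $a_1,\dots,a_k$ with $\sum_j a_j=2T$, decide whether some subset sums to $T$. Map this to a discrete‑goods instance with agents $N=\{1,2\}$, goods $M=\{1,\dots,k\}$ with $Q_j=1$, prices $p_j=a_j/T$, and demand sets $D_1=D_2=M$. Since $\sum_{j\in D_i}p_j=2>1$, in any CAEI both agents necessarily have utility $0$, so the only surviving requirements are that each good goes to exactly one of the two agents and that each one pays at most $1$; as the two bundle prices sum to $\sum_j p_j=2$, each must equal $1$. Hence a price‑supporting allocation is exactly a subset of $M$ of total price $1$, i.e. a subset of the $a_j$ summing to $T$: a supporting $\vec{x}$ for this $\vec{p}$ exists iff the \textsc{Partition} instance is a yes‑instance. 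Since \textsc{Partition} is NP‑complete, it is NP‑hard to decide whether a given price vector admits a supporting allocation — equivalently, co‑NP‑hard to certify that it admits \emph{no} such allocation, which is the intractability asserted (and the sharp contrast with the ``given $\vec{x}$'' direction above).

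The step I expect to be the main obstacle is designing the price gadget so that (a) the freedom left in choosing $\vec{x}$ encodes exactly a subset‑sum condition and (b) every agent is simultaneously made unhappy, so that the over‑demand and optimality side conditions of a CAEI never interfere; the choice $p_j=a_j/T$ together with $D_1=D_2=M$ achieves both at once, which is why the reduction collapses to a two‑bin packing question. A minor point is degenerate inputs: reject instances with $\sum_j a_j$ odd outright, and if some $a_j>T$ then $p_j>1$, good $j$ cannot be allocated to anyone at cost at most $1$, so no CAEI exists — consistent with the \textsc{Partition} instance being a no‑instance.
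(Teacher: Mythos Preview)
Your treatment of the first half coincides with the paper's: both read off the set $S$ of satisfied agents from $\vec{x}$ and then test feasibility of a linear program in the price variables $p_j$ together with a slack $\epsilon\ge 0$, declaring supporting prices to exist iff the optimal $\epsilon$ is strictly positive. Your write-up is a correct and more detailed elaboration of the paper's one-sentence sketch.

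For the hardness half the approaches diverge: the paper gives no reduction of its own and simply invokes \cite{BHM15} for the result with unit quantities, whereas you supply a self-contained reduction from \textsc{Partition}. Your construction is correct and pleasantly minimal---with two identical agents demanding all of $M$ and prices $p_j=a_j/T$ summing to $2$, neither agent can afford its demand, so the CAEI conditions reduce to splitting the goods so that each agent pays at most (hence exactly) $1$, i.e.\ to exhibiting a subset of the $a_j$ summing to $T$. One technical remark: your reduction maps yes-instances of \textsc{Partition} to yes-instances of ``a supporting $\vec{x}$ exists'', so what it literally establishes is NP-hardness of the existence question; you correctly observe this is equivalent to co-NP-hardness of the complement. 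The theorem as written calls the existence question itself co-NP-hard, which---since with $Q_j=1$ that question is visibly in NP---is a slight looseness in the statement rather than a gap in your argument. Your NP-hardness conclusion is the precise claim, and it buys a fully self-contained proof in place of the paper's citation.
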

\begin{proof}
Given an allocation $\vec{x}$, it can be checked in polynomial time if the allocation is feasible. Afterwards, we can simply write a linear program with real variables $\epsilon$ and $p_1, \ldots, p_j$ to find the supporting prices (if any). Note that $\epsilon$ will correspond to the objective of the program and will have the property that for each agent $i$ that does not get its demand, $\vec{p}(\vec{x}_i) \geq 1 + \epsilon$.

For the other direction, of finding the allocation given the prices, this problem was shown to be co-NP-hard even when all the quantities are $1$ (see, e.g., \cite{BHM15}).
\end{proof}

\end{document}